\def\NAME{main}
\def\pref{\prettyref}
\newtheorem{remark}{Remark}
\newtheorem{theorem}{Theorem}[section]
\newtheorem{lemma}[theorem]{Lemma}
\newtheorem{corollary}[theorem]{Corollary}
\newtheorem{definition}[theorem]{Definition}
\newtheorem*{theorem*}{Theorem}
\theoremstyle{example}
\newtheorem*{example}{Example}
\def\DEF{\stackrel{\Delta}=}
\def\EQDEF{\stackrel{\Delta}\Leftrightarrow}
\def\thetitle{Dynamical Properties of Discrete Reaction Networks}
\title{\thetitle}
\author{Lo\"ic Paulev\'e$^1$, Gheorghe Craciun$^2$, Heinz Koeppl$^1$}
\date{$^1$ ETH Zürich\\
$^2$ University of Wisconsin-Madison}
\tikzstyle{axe}=[->,>=stealth]
\tikzstyle{vector}=[ultra thick,->,>=latex]
\def\V{\mathcal V}
\def\O{\mathcal O}
\def\zero{\mathbf 0}
\def\R{\mathbb R}
\def\Rpos{\R_{\geq 0}}
\def\Rspos{\R_{>0}}
\def\Q{\mathbb Q}
\def\Qspos{\Q_{>0}}
\def\Z{\mathbb Z}
\def\Zpos{\Z_{\geq 0}}
\def\Zspos{\Z_{> 0}}
\def\mspan{\mathrm{span}}
\def\lp{\mathrm{lowerpoint}}
\def\orderings{\mathrm{orderings}}
\def\inner#1#2{#1#2}
\def\apply{\bullet}
\def\B{\mathcal B}
\begin{document}

\maketitle

\begin{abstract}
Reaction networks are commonly used to model the evolution of populations of species subject to
transformations following an imposed stoichiometry. 

This paper focuses on the efficient characterisation of dynamical properties of \emph{Discrete
Reaction Networks} (DRNs).
DRNs can be seen as modelling the underlying discrete nondeterministic transitions of stochastic
models of reactions networks.
In that sense, any proof of non-reachability in DRNs directly applies to any concrete
stochastic models, independently of kinetics laws and constants.
Moreover, if stochastic kinetic rates never vanish, reachability properties are equivalent in the
two settings.

The analysis of two global dynamical properties of DRNs is addressed:
irreducibility, i.e., the ability to reach any discrete state from any other state;
and recurrence, i.e., the ability to return to any initial state.
Our results consider both the verification of such properties when species are present in a large
copy number, and in the general case.
The obtained necessary and sufficient conditions involve algebraic conditions on the network
reactions which in most cases can be verified using linear programming.

Finally, the relationship of DRN irreducibility and recurrence with dynamical properties of
stochastic and continuous models of reaction networks is discussed.
\end{abstract}


\section{Introduction} 

Reaction networks describe the possible transformations between species in a system, subject to
stoichiometry constraints (e.g. $2A + B \rightarrow C +D$).
They are widely used for fine-grained modelling of various complex dynamical system, and in particular biochemical
systems.
Typically, reaction network models are equipped with kinetic laws in order to take into account the
influence of the various speeds and propensities of the involved reactions on the overall dynamics.
Depending on the nature of the systems and interacting species, those kinetics may follow different
laws.
These reaction networks and kinetic rules are then generally interpreted either in continuous
frameworks, such as ODEs \citep{FeinbergLecture,CTF-PNAS06}, which relates the dynamics of the concentration of the species;
or in stochastic frameworks, such as continuous-time Markov chains \citep{Wilkinson06,ACK-BMB10}, which precisely track the
population (copy number) of each species along the time.

In practice, such modelling techniques face two challenges:
the actual kinetics are most often unknown and may substantially vary between systems sharing the
same reaction network;
and formal analysis of the emerging dynamical properties is computationally intractable for large-scale continuous
and stochastic models.

In this paper, we propose a more abstract level of interpretation of reaction networks,
by focusing on the nondeterministic discrete evolution of the population of the species.
Given the population of each species (discrete state), the system can evolve along the application
of any reaction where the minimum amount of copy number of transformed species is present.
We consider that only one discrete reaction can be applied at a time.
Such nondeterministic systems can be formally considered as the discrete underlying dynamics of
stochastic models of reaction networks \citep{FS08}.

In such a setting, dynamics of \emph{Discrete Reaction Networks} (DRNs) naturally delimit the dynamics of
concrete stochastic systems, whatever the kinetic laws and constants:
if a reachability is proved impossible in a DRN, it is also impossible for any particular stochastic
model of the network.
In the case where the rate (or probability) of a reaction in the stochastic model never
becomes zero, the (discrete) reachability properties of the stochastic model are equivalent with the
corresponding properties of the underlying DRN.
In general, one can think of a DRN as underlying any discrete stochastic model of the reaction
network.

Here, we demonstrate that some general dynamical reachability properties
can be efficiently derived from a DRN:
the capacity to reach any discrete state from any other state (irreducibility);
and the reversibility of the reachability properties (recurrence).
Such properties are both considered in the case where species are present in a large copy number as
well as in the general case.
These results help provide an understanding of the possible global dynamics of reaction networks,
and give a direct relationship between the structure of the set of reactions and the verification of
the mentioned dynamical properties, without any assumption on kinetic laws.

\bigskip

The main objects and results presented in this paper are summarised below.

\paragraph{Notations.}
For any $a,b$ in $\Z$, $[a;b]$ denotes the set of integers between $a$ and $b$ that is
	$\{ a, a+1, \dots, b \}$.
For any $x,x'$ in $\Z^d$, $x$ is \emph{greater} than $x'$, denoted $x \succeq x'$, if and only if
every component of $x$ is greater than the corresponding component in $x'$, i.e., for any $i$ in
$[1;d]$, $x_i \geq x'_i$.
The set of matrices of elements in $G$ having $n$ lines and $d$ columns is denoted by
$G^{n\times d}$.
If $\V$ is in $G^{n\times d}$, for any $j\in[1;n]$, $V_j$ is the $j^{\text{th}}$ line, and
$\V_j$ is in $G^d$.
Given a field $F$, and a matrix $\V \in G^{n\times d}$,
the span of $\V$ over $F$ is denoted by
$\mspan_F \V \DEF \{ \inner{\lambda}{\V} \mid \lambda\in F^n \}$.
Finally, the null vector is referred to as $\zero$.

\paragraph{Discrete Reaction Networks}

We consider a set of reactions between $d$ species $A_i, i\in[1;d]$ of the form
\begin{equation}
c_1 A_1 + \dots + c_d A_d \longrightarrow c'_1 A_1 + \dots + c'_d A_d
\label{eq:reaction}
\end{equation}
where for any $i$ in $[1;d]$,  $c_i$ and $c'_i$ are in $\Zpos$.
The reaction can be applied as soon as the population of species $A_i$ is at least $c_i$, for any
$i$ in $[1;d]$.
Its application decreases the population of species $A_i$ by $c_i$ and then increases it by $c'_i$.
Such a reaction can be summarised by two vectors of dimension $d$:
$v = (c'_1-c_1, \cdots, c'_d-c_d)$, the \emph{drift} vector describing the population changes after
application of the reaction;
and $o = (c_1, \cdots, c_d)$, the \emph{origin} of the reaction, that is the minimum required
population of species for applying the reaction.

In such a setting, a \emph{Discrete Reaction Network} (DRN) of $n$ reactions between $d$ species 
can be defined by a couple $(\V,\O)$ of two matrices having $d$ columns and $n$ rows:
$\V$ gathers the drift vectors of the $n$ reactions and $\O$ their origins (\pref{def:DRN}).
We impose that each reaction can be applied at least once from its origin, i.e. the population of
species does not reach negative values.

\begin{definition}[Discrete Reaction Network]
\label{def:DRN}
A \emph{Discrete Reaction Network} (DRN) is a couple $(\V,\O)$, where
$\V\in \Z^{n\times d}$,
$\O\in \Zpos^{n\times d}$,
and $\forall i\in [1;n]$, $\O_i+\V_i\succeq\zero$.
$n$ is the \emph{size} and $d$ is the \emph{dimension} of the DRN.
\end{definition}

\begin{example}
\pref{fig:examples} shows two examples of DRNs with 3 reactions between 2 species.
\begin{itemize}
\item 
Example (a).
reactions:
$
\begin{array}{rcl}
\varnothing & \rightarrow & 2A\\
A + B & \rightarrow & \varnothing\\
5A & \rightarrow &4A + 3B
\end{array}
\Rightarrow
\V = \left(
\begin{array}{rr}
2 & 0\\
-1 & -1\\
-1 & 3
\end{array}
\right),
\O = \left(
\begin{array}{rr}
0 & 0\\
1 & 1\\
5 & 0
\end{array}
\right)$
\enspace.
\item 
Example (b).
reactions:
$
\begin{array}{rcl}
\varnothing & \rightarrow & 2A\\
A + B & \rightarrow & \varnothing\\
5A & \rightarrow &4A + 2B
\end{array}
\Rightarrow
\V = \left(
\begin{array}{rr}
2 & 0\\
-1 & -1\\
-1 & 2
\end{array}
\right),
\O = \left(
\begin{array}{rr}
0 & 0\\
1 & 1\\
5 & 0
\end{array}
\right)$
\enspace.
\end{itemize}
\end{example}

\begin{figure}
\centering

\begin{tabular}{c@{\hskip 1in}c}

\begin{tikzpicture}

\draw[help lines] (-0.2,-0.2) grid (5.8,4.8);

\draw[axe] (0,0) -- (5.8,0);
\draw[axe] (0,0) -- (0,4.8);
\node[anchor=north east] at (5.8,0) {$A$};
\node[anchor=north east] at (0,4.8) {$B$};

\node[anchor=north east] at (0,0) {$\zero$};

\draw[vector] (0,0) -> (2,0) node [sloped, above, near end] {$\V_1$};
\fill (0,0) circle (2pt);
\node[anchor=north west] at (0,0) {$\O_1$};

\draw[vector] (1,1) -> (0,0) node [sloped,above,midway] {$\V_2$};
\fill (1,1) circle (2pt);
\node[anchor=south west] at (1,1) {$\O_2$};

\draw[vector] (5,0) -> (4,3) node [sloped,above,midway] {$\V_3$};
\fill (5,0) circle (2pt);
\node[anchor=south west] at (5,0) {$\O_3$};

\end{tikzpicture}

&

\begin{tikzpicture}

\draw[help lines] (-0.2,-0.2) grid (5.8,4.8);

\draw[axe] (0,0) -- (5.8,0);
\draw[axe] (0,0) -- (0,4.8);
\node[anchor=north east] at (5.8,0) {$A$};
\node[anchor=north east] at (0,4.8) {$B$};
\node[anchor=north east] at (0,0) {$\zero$};

\draw[vector] (0,0) -> (2,0) node [sloped, above, near end] {$\V_1$};
\fill (0,0) circle (2pt);
\node[anchor=north west] at (0,0) {$\O_1$};

\draw[vector] (1,1) -> (0,0) node [sloped,above,midway] {$\V_2$};
\fill (1,1) circle (2pt);
\node[anchor=south west] at (1,1) {$\O_2$};

\draw[vector] (5,0) -> (4,2) node [sloped,above,midway] {$\V_3$};
\fill (5,0) circle (2pt);
\node[anchor=south west] at (5,0) {$\O_3$};

\end{tikzpicture}
\\
Example (a) & Example (b)
\end{tabular}

\caption{Two DRNs with 3 reactions between 2 species $A$ and $B$.}
\label{fig:examples}
\end{figure}

We will see in \pref{sec:irreducibility} and \ref{sec:recurrence} that these similar-looking DRNs
have difference dynamical properties.

\paragraph{Discrete transitions}
The population of the $d$ species of the DRN forms a \emph{discrete state} (or point) of the DRN, and is represented as a
vector $x$ in $\Zpos^d$.
At $x$, only the reactions $j$ in $[1;n]$ such that $x \succeq \O_j$ can occur.
The occurrence of one of these reactions leads to the state $x'=x+\V_j$, with necessarily
$x'$ in $\Zpos$.
The transition relation $\rightarrow$ (\pref{def:DRN-Sem}) is defined such that
$x \rightarrow x'$ if and only if $x'$ can be reached by the occurrence of one (and only one) reaction from $x$.
The binary relation $\rightarrow^*$ extends the binary relation $\rightarrow$ by considering the
successive occurrence of any number of reactions.
Hence for any $x,x'$ in $\Zpos^d$, $x\rightarrow^* x'$ if and only if there exists a
sequence of reaction occurrences from $x$ leading to exactly $x'$ which never makes negative the population
of any species.

\begin{definition}[Transition relation $\rightarrow$]
\label{def:DRN-Sem}
Given a DRN $(\V,\O)$ and two points $x,x'\in\Zpos^d$, 
$x \rightarrow_{(\V,\O)} x'$ if and only if $\exists i\in[1;n]$ such that
$x \succeq \O_i$ and $x+\V_i = x'$.
$\rightarrow^*_{(\V,\O)}$ is the transitive closure of binary relation $\rightarrow_{(\V,\O)}$.
When clear from context, $\rightarrow_{(\V,\O)}$ is written as $\rightarrow$.
\end{definition}

\paragraph{Irreducibility and Recurrence}
In this paper, we focus on two dynamical properties of DRNs:
\begin{itemize}
\item \emph{Irreducibility}: a DRN is irreducible if and only if one can reach any point $x'\in\Zpos$ from any
point $x\in\Zpos$ (\pref{def:irreducibility}).

\item \emph{Recurrence}: a DRN is recurrent if and only if one can always reverse the application of
any sequence of reactions (\pref{def:recurrence}).

\end{itemize}
It is worth noticing that any irreducible DRN is recurrent (\pref{rem:irred-lt-recur}).

\begin{definition}[Irreducibility]
\label{def:irreducibility}
DRN $(\V,\O)$ is \emph{irreducible} if and only if
$\forall x,x'\in \Zpos^d$, $x\rightarrow^* x'$ and $x'\rightarrow^* x$.
\end{definition}

\begin{definition}[Recurrence]
\label{def:recurrence}
DRN $(\V,\O)$ is \emph{recurrent} if and only if
$\forall x,x'\in \Zpos^d$, $x\rightarrow^* x' \Longrightarrow x'\rightarrow^* x$.
\end{definition}

\begin{remark}
\label{rem:irred-lt-recur}
Irreducibility $\Longrightarrow$ Recurrence.
\end{remark}

In addition of considering irreducibility and recurrence from any possible population of species of
the DRN, we also investigate a weaker version of those dynamical properties when assuming the
species are present at a Large Copy Number (LCN).
This basically restricts the above dynamical properties to population of species greater than a
certain threshold $M_0$ in $\Zpos^d$.
We refer to these weaker properties as
\emph{LCN irreducibility} (\pref{def:lcn-irreducibility})
and \emph{LCN recurrence} (\pref{def:lcn-recurrence}), respectively.
Note that the inclusion relationship between irreducibility and recurrence still holds
(\pref{rem:lcn-irred-lt-recur}).

\begin{definition}[LCN Irreducibility]
\label{def:lcn-irreducibility}
DRN $(\V,\O)$ is \emph{LCN irreducible} if and only if
$\exists M_0\in\Zpos^d$ such that
$\forall x,x'\in \Zpos^d$ with $x\succeq M_0$ and $x'\succeq M_0$,
$x\rightarrow^* x'$ and $x'\rightarrow^* x$.
\end{definition}

\begin{definition}[LCN Recurrence]
\label{def:lcn-recurrence}
DRN $(\V,\O)$ is \emph{LCN recurrent} if and only if
$\exists M_0\in\Zpos^d$ such that
$\forall x,x'\in \Zpos^d$ with $x\succeq M_0$ and $x'\succeq M_0$,
$x\rightarrow^* x' \Longrightarrow x'\rightarrow^* x$.
\end{definition}

\begin{remark}
\label{rem:lcn-irred-lt-recur}
LCN Irreducibility $\Longrightarrow$ LCN Recurrence.
\end{remark}

\subsection*{Main Results}

In \pref{sec:irreducibility}, we first demonstrate that LCN irreducibility is equivalent to have both
the strictly positive real span of drift vectors being $\R^d$ and
the integer span of drift vectors being $\Z^d$.
\begin{theorem*}[\ref{thm:lcn-irreducibility}]
DRN $(\V,\O)$ is LCN irreducible if and only if $\mspan_{\Rspos} \V = \R^d$ and $\mspan_{\Z} \V =
\Z^d$.
\end{theorem*}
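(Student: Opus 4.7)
The plan is to prove the two directions separately. For the forward direction I would argue by contraposition in two cases. If $\mspan_{\Rspos}\V \neq \R^d$, a standard polyhedral separation argument (noting that $\mspan_{\Rspos}\V$ and $\mspan_{\Rpos}\V$ simultaneously equal $\R^d$) produces a nonzero $u \in \R^d$ with $u \cdot \V_j \geq 0$ for every $j$, so $u \cdot x$ is non-decreasing along any reaction sequence. For any candidate $M_0$, picking any coordinate $i$ with $u_i \neq 0$ and comparing $x = M_0 \mathbf{1}$ with $x' = x + e_i$ gives two points in the LCN region with $u \cdot x \neq u \cdot x'$, so reachability must fail in (at least) one of the two directions, contradicting LCN irreducibility. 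If instead $L := \mspan_{\Z}\V$ is a proper sublattice of $\Z^d$, every reachable displacement lies in $L$; when $L$ has rank $<d$ this reduces to the previous case, and otherwise $L$ contains $D\mathbf{1}$ for some $D \in \Zspos$ (clear denominators in the rational expansion of $\mathbf{1}$ over a $\Z$-basis of $L$), so translating any non-trivial coset representative by a large multiple of $D\mathbf{1}$ produces some $e \in \Zpos^d \setminus L$, and then $x = M_0 \mathbf{1}$, $x' = x + e$ again break reachability.

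For the converse direction, I would first extract from $\mspan_{\Rspos}\V = \R^d$ a vector $\mu \in \Zspos^n$ with $\sum_j \mu_j \V_j = \zero$ (take any strictly positive rational solution and clear denominators), and from $\mspan_{\Z}\V = \Z^d$ an integer right-inverse $U$ of the surjection $\Z^n \to \Z^d,\ \lambda \mapsto \sum_j \lambda_j \V_j$ (via Smith normal form), so that $\|Uz\|_\infty \leq C_0 \|z\|_\infty$ for a constant $C_0$ depending only on $\V$. Given $x, x' \succeq M_0 \mathbf{1}$ with $z := x' - x$, I set $\alpha := Uz$ and $\lambda := \alpha + T\mu$ for $T := K \lceil \|z\|_\infty \rceil$ with an integer constant $K$ large enough that every $\lambda_j \geq 1$ and $|\alpha_j| \leq T$. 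By construction $\sum_j \lambda_j \V_j = z$.

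The crucial step is to realise this multiset $\lambda$ as an actual reaction path. I propose performing $T$ slightly-modified $\mu$-cycles: a plain $\mu$-cycle applies reactions $1, 2, \dots, n$ in turn, each reaction $j$ repeated $\mu_j$ consecutive times, with zero net drift. The modifications distribute, uniformly across the $T$ cycles, $|\alpha_j|$ extra (when $\alpha_j > 0$) or skipped (when $\alpha_j < 0$) applications of each reaction $j$, placed at evenly-spaced cycle indices $\lfloor iT/|\alpha_j| \rfloor$. Since $|\alpha_j| \leq T$, each cycle receives at most one modification per reaction, and the cumulative count of reaction $j$ after $k$ cycles differs from the ideal $k \mu_j + (k/T)\alpha_j$ by at most $1$. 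Consequently the state at each cycle boundary equals $x + (k/T) z + E_k$ with $\|E_k\|_\infty \leq n \max_j \|\V_j\|_\infty$, and within a cycle the state deviates by at most an additional $\sum_j (\mu_j + 1) \|\V_j\|_\infty$; call the combined bound $C$, which depends only on $(\V, \O, \mu)$.

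To conclude, note that $x + (t/T) z$ is a convex combination of $x$ and $x'$, hence componentwise between $\min(x, x')$ and $\max(x, x')$, both of which dominate $M_0 \mathbf{1}$, so the entire scheduled path stays $\succeq (M_0 - C) \mathbf{1}$. Choosing $M_0$ to be any integer at least $C + \max_j \|\O_j\|_\infty$ guarantees that every reaction scheduled along the path has its origin dominated by the current state, so the path is valid and reaches $x'$, establishing LCN irreducibility. I expect the main technical obstacle to be the bounded-deviation scheduling step --- essentially a fair/periodic scheduling argument keeping $\|E_k\|_\infty$ bounded independently of $\|z\|_\infty$; everything else reduces to routine polyhedral duality and Smith-normal-form algebra.
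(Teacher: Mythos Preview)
Your argument is correct, but the converse direction is considerably more elaborate than the paper's. The paper factors through the intermediate condition $\mspan_{\Zpos}\V=\Z^d$ (\pref{lem:LCNir-posZspan} together with \pref{thm:lcn-irreducibility}): combining a strictly positive integer null-combination $\mu$ with integer combinations for $\pm e_i$ yields $\lambda^{i,\pm}\in\Zpos^n$ with $\lambda^{i,\pm}\V=\pm e_i$; one then picks an \emph{arbitrary} ordering of each of these $2d$ fixed, bounded-length multisets and chooses $M_0$ large enough that the corresponding paths stay above every origin. This already gives $M_0\to^* M_0\pm e_i$, and by \pref{lem:expansion} any $x'\succeq M_0$ is then reached from any $x\succeq M_0$ one coordinate at a time (routing through $\max(x,x')$ to remain $\succeq M_0$), with no scheduling analysis whatsoever. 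Your fair interleaving of $T$ modified $\mu$-cycles, building a single path from $x$ to $x'$ whose intermediate states track the segment $[x,x']$ up to a bounded error, is sound and is a genuinely useful technique in other contexts, but here it is overkill: reducing to unit moves lets $M_0$ depend only on $2d$ fixed paths rather than on the bounded-deviation control of arbitrarily long ones. Your forward direction (separation plus lattice-coset argument) is also correct and merely spells out what the paper dismisses as straightforward; the scalar-versus-vector usage of $M_0$ (writing $M_0\mathbf 1$) is a harmless notational slip since one may always replace a vector threshold by its largest component.
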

\noindent
Verifying $\mspan_{\Rspos} \V=\R^d$ can be done using linear programming, and verifying $\mspan_{\Z}
\V = \Z^d$ can be also efficiently done by computing, for instance, the Hermite normal form of $\V$.

Then, we show additional properties that lead to full irreducibility: self-starting (capability to
reach a strictly positive point from $\zero$) and self-stopping (capability to reach $\zero$ from a
strictly positive point).
\begin{theorem*}[\ref{thm:irreducibility}]
DRN $(\V,\O)$ is irreducible if and only if $(\V,\O)$ is LCN irreducible, self-starting and
self-stopping.
\end{theorem*}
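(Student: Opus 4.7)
The plan is to prove the two directions of the equivalence separately. The forward implication is essentially immediate from \pref{def:irreducibility}: if $(\V,\O)$ is irreducible then LCN irreducibility holds with $M_0 = \zero$; self-starting follows by instantiating irreducibility at $x=\zero$ and any $x'\succ\zero$; and self-stopping follows by instantiating it at any $x\succ\zero$ and $x'=\zero$.

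The substance lies in the reverse implication. The central tool I would establish first is a \emph{translation} observation: if a sequence of reactions can be fired from a point $p$ and ends at $q$, then starting from $p+u$ for any $u\in\Zpos^d$ the same sequence remains firable and ends at $q+u$. This is because each intermediate firability condition $\O_j \preceq \text{current state}$ is preserved under adding $u\succeq\zero$, and drifts add linearly. As consequences, self-starting (witnessed by some $y^\star\succ\zero$ with $\zero \to^* y^\star$) promotes to $x \to^* x + k\, y^\star$ for every $x\in\Zpos^d$ and every $k\in\Zpos$; and self-stopping (witnessed by some $w\succ\zero$ with $w \to^* \zero$) promotes to $z \to^* z - w$ whenever $z\succeq w$, which can be iterated.

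Given these tools, for arbitrary $x,x'\in\Zpos^d$ I would construct $x\to^* x'$ in three stages. Stage~1: iterate translated self-starting from $x$ to reach $z_1 := x + k\, y^\star$ with $k$ chosen large enough that $z_1 \succeq M_0$, which is possible because $y^\star\succ\zero$. Stage~2: choose $m$ large enough that $z_2 := x' + m\, w \succeq M_0$, and apply LCN irreducibility to obtain $z_1 \to^* z_2$. Stage~3: iterate translated self-stopping $m$ times starting from $z_2$, descending $x' + m\,w \to^* x'+(m-1)\,w \to^* \dots \to^* x'+w \to^* x'$; each step is valid because the current state dominates $w$. Concatenating the three segments yields $x \to^* x'$.

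The only genuine obstacle is formulating and proving the translation lemma cleanly, since everything else is bookkeeping: once translation is in hand, the choices of $k$ and $m$ exist because $y^\star$ and $w$ are strictly positive, so their multiples eventually exceed any prescribed threshold componentwise. The proof never needs more than one witness $y^\star$ of self-starting and one witness $w$ of self-stopping, which is why the weak existential readings of these two properties are already enough to upgrade LCN irreducibility to full irreducibility.
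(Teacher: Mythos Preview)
Your argument is correct and rests on the same three ingredients as the paper's proof: the translation/expansion observation (\pref{lem:expansion}), iteration of the self-starting witness to reach the LCN region (the paper packages this as \pref{lem:scale}), and iteration of the self-stopping witness to descend from the LCN region. The organisational difference is that the paper does not prove $x\to^* x'$ for arbitrary $x,x'$ directly; instead it invokes \pref{lem:irreducibility-ei} and only constructs the paths $\zero\to^* e_i$ and $e_i\to^* \zero$ for each $i\in[1;d]$, passing through scaled witnesses $\alpha x$ and $\alpha x'$ in the LCN region. Your direct three-stage construction is slightly cleaner in that it bypasses the $e_i$ reduction and does not need to mention LCN irreducibility of the inverse DRN (which the paper records but, as your argument shows, is not actually required once one uses translated self-stopping). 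Conversely, the paper's route makes explicit that the whole content of irreducibility is already captured by the $2d$ elementary reachabilities, which is a structural point worth having on record.
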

\noindent
Self-starting and self-stopping properties can be decided using a backtracking algorithm combined
with linear programming to find a particular order of reactions

In \pref{sec:recurrence}, we prove that LCN recurrence is equivalent to the presence of $\zero$ in the strictly
positive real span of drift vectors.
Surprisingly, no integer constraints need to be checked, so this property can be easily decided
using linear programming.
\begin{theorem*}[\ref{thm:lcn-recurrent}]
DRN $(\V,\O)$ is LCN recurrent if and only if $\zero\in\mspan_{\Rspos} \V$.
\end{theorem*}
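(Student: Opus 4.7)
The plan is to prove both implications separately; the converse is the more delicate one.

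\emph{Forward direction ($\Rightarrow$).} I pick an anchor $y$ componentwise above $M_0 + (\max_i \|\V_i\|_\infty)\mathbf{1} + \max_i \O_i$, so every reaction is fireable from $y$ and $y + \V_i \succeq M_0$ for every $i$. For each reaction $i$, the single step $y \rightarrow y + \V_i$ gives $y \rightarrow^* y + \V_i$, and LCN recurrence then yields a return $y + \V_i \rightarrow^* y$; decomposing this return into firing counts produces $\mu^{(i)} \in \Zpos^n$ with $\V_i + \sum_j \mu^{(i)}_j \V_j = \zero$. Summing these $n$ identities over $i$ gives
\[
\sum_{j=1}^n \Bigl(1 + \sum_{i=1}^n \mu^{(i)}_j\Bigr) \V_j = \zero,
\]
a strictly positive integer combination equal to $\zero$, so $\zero \in \mspan_{\Zspos}\V \subseteq \mspan_{\Rspos}\V$.

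\emph{Converse direction ($\Leftarrow$).} First I upgrade $\zero \in \mspan_{\Rspos}\V$ to the existence of some $k \in \Zspos^n$ with $\sum_j k_j \V_j = \zero$: the set $\{\lambda \in \Rspos^n : \lambda \V = \zero\}$ is a relatively open subset of a $\Q$-definable linear subspace, hence contains a rational point, and clearing denominators delivers the desired positive integer $k$. Now, given $x, x' \succeq M_0$ with $x \rightarrow^* x'$, I extract firing counts $\mu \in \Zpos^n$ realising $\sum_j \mu_j \V_j = x' - x$ and pick $N$ large enough that $\nu_j := N k_j - \mu_j \geq 0$ for every $j$. Then $\sum_j \nu_j \V_j = x - x'$, so realising a firing sequence with counts $\nu$ from $x'$ would land exactly at $x$ and prove $x' \rightarrow^* x$.

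\emph{Realisation via balanced scheduling.} Let $L := \sum_j \nu_j$ and $p_t := (1 - t/L)\,x' + (t/L)\,x$ for $t \in \{0,\dots,L\}$. At each step, I fire the reaction $j$ whose current count $m_j(t)$ most lags its proportional target $(t/L)\nu_j$. A standard apportionment-type discrepancy bound keeps $|m_j(t) - (t/L)\nu_j| < 1$ throughout, so the actual state $s_t = x' + \sum_j m_j(t)\V_j$ stays within $B\,\mathbf{1}$ of $p_t$ componentwise, where $B := n \cdot \max_j \|\V_j\|_\infty$ depends only on $(\V,\O)$. Because $x, x' \succeq M_0$, convexity gives $p_t \succeq M_0$; choosing $M_0 := \max_i \O_i + B\,\mathbf{1}$ componentwise forces $s_t \succeq \max_i \O_i$ at every step, so every reaction---in particular the lagging one selected by the rule---is fireable. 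The schedule completes at $s_L = x' + \sum_j \nu_j \V_j = x$.

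\emph{Main obstacle.} The crux is the interlocked induction tying trajectory control to fireability: the deviation bound $\|s_t - p_t\|_\infty \le B$ is required to ensure each firing is legal, while legality is required to execute the balanced rule that produces the bound. The induction closes precisely because $B$ depends only on $(\V,\O)$ and not on $N$, $\mu$, or the specific $x, x'$, so a single threshold $M_0$ works uniformly. The two technical ingredients powering the construction are the density argument lifting $\Rspos$ to $\Zspos$ for $k$, and the apportionment discrepancy bound on proportional scheduling.
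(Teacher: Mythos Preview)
Your argument is correct in outline and genuinely different from the paper's. For the forward direction you give a clean explicit construction where the paper simply says ``straightforward''. For the converse, the paper proceeds structurally: it takes a basis $\{b_1,\dots,b_k\}$ of the free $\Z$-module generated by $\V$, observes that $\zero\in\mspan_{\Zspos}\V$ forces $\pm b_i\in\mspan_{\Zpos}\V$, fixes one path realising each $\pm b_i$, and then chooses $M_0$ so large that every concatenation of these $2k$ paths (in any order) stays legal. Two points $x,x'\succeq M_0$ with $x\to^* x'$ then lie in the same lattice coset, and one walks back from $x'$ to $x$ along edges of the fundamental parallelepipeds. Your route bypasses the lattice basis entirely: you manufacture a single reverse firing vector $\nu=Nk-\mu$ and realise it by tracking the straight segment $[x',x]$ with a balanced schedule. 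The paper's approach yields the extra structural dividend recorded as \pref{cor:saturation} (saturation of the reachability graph above $M_0$); your approach is arguably more elementary and gives an explicit, instance-dependent return path.

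One caveat: the specific claim that the greedy ``fire the most lagging reaction'' rule keeps $|m_j(t)-(t/L)\nu_j|<1$ is not quite a standard named result, and the rule you describe is not Tijdeman's chairman-assignment construction (which does achieve discrepancy below~$1$ but via a different schedule). The upper bound $m_j(t)-(t/L)\nu_j<1$ follows immediately from ``the fired reaction has nonpositive deviation'', but the matching lower bound for greedy is less obvious. This does not damage your proof: all you actually need is that \emph{some} schedule keeps the discrepancy bounded by a constant depending only on $n$, so that $B$ and hence $M_0$ are uniform in $x,x',\mu,N$. Either invoke Tijdeman's theorem directly, or note that the cruder bound $|m_j(t)-(t/L)\nu_j|\le n$ for greedy is easy and already suffices (just enlarge $B$). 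With that adjustment the argument is complete.
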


\bigskip

\pref{sec:examples} applies those results to DRNs modelling biological systems.
Presented results and their relationships with stochastic and continuous models of reaction networks are
discussed in \pref{sec:discussion}.
For example, we show how we can use the theorems above to check that common phosphorylation chain
networks are LCN recurrent and some circadian clock networks are LCN irreducible.

\section{Additional definitions, basic properties}
\label{sec:defs}

\subsection{Set of points and paths manipulation}

We introduce the following notations to manipulate set of points and paths (sequences of reactions):
\begin{description}
\item[lowerpoint]
Given a set of $m$ points $\{x_1, \dots, x_m\} \subset \Z^d$,
we denote by $\lp(\{x_1, \dots, x_m\})$ a point that is lower than all the given points:
\begin{align*}
\lp(\{x_1, \dots, x_m\}) & \DEF y \in \Z^d:
	\forall i\in[1;d], y_i = \min \{ x_{j,i}\mid j\in [1;m] \}\\
\end{align*}

\item[orderings] 
Given $\lambda\in\Zpos^n$ with $\ell=\sum_{i=1}^n \lambda_i$,
we denote by $\orderings(\lambda)$ all the mappings $\pi:[1;\ell]\mapsto [1;n]$ which map exactly
$\lambda_i$ distinct values to $i$, $\forall i\in[1;n]$:
\[
\orderings(\lambda) \DEF \{ \pi:[1;\ell]\mapsto[1;n] \mid
    \forall i\in[1;n], \lambda_i=\#\{ j \in[1;\ell]\mid \pi(j)=i\} \}
\]
where $\#\{e_1, \dots, e_k\} \DEF k$.

Hereafter, we use such mappings $\pi:[1;\ell]\mapsto [1;n]$ to refer to \emph{paths}, i.e. sequences
of reactions.
In such a context, $\lambda\in\Zpos^n$ should be understood as the vector giving the number of
times each reactions in $[1;n]$ has to be used in a path;
and $\orderings(\lambda)$ as all the possible permutations of such paths.

\item[path application ($x \apply \pi$)]
Given a DRN $(\V,\O)$ of size $n$ and dimension $d$, a path $\pi:[1;\ell]\mapsto [1;n]$, and an initial point $x\in\Z^d$,
$x\apply\pi$ is the set of points resulting from the sequential application of $\pi$ from $x$:
\[x\apply \pi \DEF \{ x + \sum_{i=1}^k v_{\pi(i)} \mid k\in [0;\ell] \}\enspace.\]
\end{description}

\subsection{Inverse DRN}

The \emph{inverse DRN} (\pref{def:1-DRN}) is defined by the negative drift vectors and the
origins shifted by the original drift vector.
For instance, the inverse of the reaction described in \pref{eq:reaction} results in:
\begin{equation}
(c_1+c'_1) A_1 + \dots + (c_d+c'_d) A_d \longrightarrow c_1 A_1 + \dots + c_d A_d
\end{equation}

\begin{definition}[Inverse DRN]
\label{def:1-DRN}
Given a DRN $(\V,\O)$, $(\V,\O)^{-1}\DEF(-\V,\O+\V)$ is the \emph{inverse DRN}.
\end{definition}
\begin{lemma}
$x \rightarrow_{(\V,\O)} x' \Longleftrightarrow x' \rightarrow_{(\V,\O)^{-1}} x$.
\label{lem:inverse}
\end{lemma}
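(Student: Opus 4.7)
The plan is to unfold both sides of the equivalence using \pref{def:DRN-Sem} applied to $(\V,\O)$ and to $(\V,\O)^{-1}=(-\V,\O+\V)$, and observe that the stoichiometric precondition and the drift update correspond to each other under the bijection $v \mapsto -v$, $o \mapsto o+v$ reaction by reaction. Concretely, $x\rightarrow_{(\V,\O)} x'$ exhibits some $i\in[1;n]$ with $x\succeq\O_i$ and $x+\V_i=x'$, while $x'\rightarrow_{(\V,\O)^{-1}} x$ exhibits some $i\in[1;n]$ with $x'\succeq \O_i+\V_i$ and $x'+(-\V_i)=x$. The same index $i$ will witness both directions.

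First I would do the forward direction. Given $i$ with $x\succeq\O_i$ and $x'=x+\V_i$, the equation $x'+(-\V_i)=x$ is immediate. For the precondition on the inverse side, I add $\V_i$ to both sides of $x\succeq\O_i$, using componentwise monotonicity of $\succeq$ under translation, to get $x'=x+\V_i\succeq\O_i+\V_i$. Then I would do the backward direction symmetrically: from $x'\succeq\O_i+\V_i$ and $x=x'-\V_i$, subtracting $\V_i$ yields $x\succeq\O_i$, and $x+\V_i=x'$ follows by rearrangement.

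Before concluding, I would record one small sanity check: that $(\V,\O)^{-1}$ is itself a well-formed DRN in the sense of \pref{def:DRN}. This requires $(\O+\V)_i + (-\V)_i = \O_i \succeq \zero$, which holds because $\O\in\Zpos^{n\times d}$; and the origins $\O+\V$ are indeed in $\Zpos^{n\times d}$ thanks to the hypothesis $\O_i+\V_i\succeq\zero$ in \pref{def:DRN}. Without this observation the statement would not even type-check.

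The main obstacle is essentially nil: the proof is pure definition chasing, and the only point deserving care is the componentwise monotonicity argument $x\succeq\O_i \Leftrightarrow x+\V_i\succeq \O_i+\V_i$, which justifies transporting the precondition between the two DRNs in both directions simultaneously. Because this equivalence holds with the \emph{same} witness $i$, no case analysis on reactions is needed, and the argument is a two-line calculation in each direction.
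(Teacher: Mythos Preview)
Your argument is correct; the paper itself states \pref{lem:inverse} without proof, presumably because it is immediate from \pref{def:DRN-Sem} and \pref{def:1-DRN}. Your unfolding, including the observation that $(\V,\O)^{-1}$ is a well-formed DRN, fills in exactly the routine verification the authors omitted.
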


\subsection{Basic properties}

From the definition of transitions between the discrete states of the DRN (\pref{def:DRN-Sem}), one can easily
derives that if $x \rightarrow^* x'$ then any succession of reactions from $x$ to $x'$ can be applied from
$x$ (positively) shifted by any $\delta\in\Zpos^d$, leading to $x'+\delta$ (\pref{lem:expansion}).
In the particular case when $\zero \rightarrow^* x'$, one can instantiate the latter property with
$\delta=x'$, which by transitivity of $\rightarrow$ leads to $\zero\rightarrow^* \alpha x'$ with $\alpha\in\Zspos$
(\pref{lem:scale}).

\begin{lemma}
\label{lem:expansion}
Given $x,x'\in\Zpos$,
$x\rightarrow^* x' \Longrightarrow \forall \delta \in \Zpos, x+\delta \rightarrow^* x'+\delta$.
\end{lemma}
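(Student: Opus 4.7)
The plan is to prove this by a straightforward induction on the length of a witnessing path from $x$ to $x'$. The underlying intuition is monotonicity: every reaction's applicability condition $y \succeq \O_i$ is preserved when $y$ is shifted up by a non-negative vector, and the drift $\V_i$ acts additively, so shifting by $\delta$ commutes with applying a reaction. Hence the whole path that realises $x \rightarrow^* x'$ can be replayed verbatim from $x + \delta$ and will reach $x' + \delta$ without ever making any coordinate negative.

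First I would establish the single-step version: if $x \rightarrow x'$ via reaction $i$ (so $x \succeq \O_i$ and $x' = x + \V_i$), then for any $\delta \in \Zpos^d$ we have $x + \delta \succeq \O_i$ (since $\delta \succeq \zero$ and $\succeq$ is preserved under addition of non-negative vectors) and $(x + \delta) + \V_i = x' + \delta$, which by \pref{def:DRN-Sem} gives $x + \delta \rightarrow x' + \delta$. The membership $x' + \delta \in \Zpos^d$ is automatic from $x' \in \Zpos^d$ and $\delta \in \Zpos^d$.

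Next I would lift this to $\rightarrow^*$ by induction on the number of reactions in a path witnessing $x \rightarrow^* x'$. The base case (zero reactions, $x = x'$) is trivial. For the inductive step, decompose the path as $x \rightarrow^* y \rightarrow x'$, apply the inductive hypothesis to obtain $x + \delta \rightarrow^* y + \delta$, use the single-step version to obtain $y + \delta \rightarrow x' + \delta$, and concatenate. Transitivity of $\rightarrow^*$ then yields $x + \delta \rightarrow^* x' + \delta$.

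There is essentially no obstacle here; the only point requiring care is making explicit that every intermediate state along the shifted path stays in $\Zpos^d$, but this is immediate because the corresponding unshifted intermediate state is in $\Zpos^d$ (by hypothesis that $x \rightarrow^* x'$ is a legal path) and adding the non-negative vector $\delta$ can only make coordinates larger. Consequently, no negative-population condition is ever violated, and the applicability hypothesis $y \succeq \O_i$ transfers unchanged to $y + \delta \succeq \O_i$ at each step.
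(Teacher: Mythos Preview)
Your proposal is correct and matches the paper's approach: the paper does not give a formal proof of this lemma, merely noting before the statement that any succession of reactions from $x$ to $x'$ can be replayed from the shifted point $x+\delta$. Your induction on path length makes this explicit and is exactly the intended argument.
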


\begin{lemma}
$\zero\rightarrow^* x' \Rightarrow \forall \alpha\in\Zspos, \zero\rightarrow^* \alpha x'$.
\label{lem:scale}
\end{lemma}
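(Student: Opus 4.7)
The plan is to derive the lemma as a short induction on $\alpha$, using \pref{lem:expansion} at each step. The base case $\alpha=1$ is just the hypothesis $\zero\rightarrow^* x'$. Note also that since $x'$ is reached from $\zero$ via valid DRN transitions, we have $x'\in\Zpos^d$, so $x'$ is a legitimate shift vector for \pref{lem:expansion}.

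For the inductive step, suppose $\zero\rightarrow^* \alpha x'$ for some $\alpha\in\Zspos$. Apply \pref{lem:expansion} to the hypothesis $\zero\rightarrow^* x'$ with the shift $\delta = \alpha x' \in \Zpos^d$: this yields $\alpha x' \rightarrow^* (\alpha+1)x'$. Chaining with the inductive hypothesis by transitivity of $\rightarrow^*$ gives $\zero \rightarrow^* (\alpha+1)x'$, which closes the induction.

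There is no real obstacle here; the whole proof is essentially the observation spelled out in the paragraph preceding the statement, promoted from ``one iteration'' to ``arbitrarily many'' by an immediate induction. The only subtlety worth mentioning explicitly in the write-up is that every intermediate point $k x'$ for $k\in[0;\alpha]$ lies in $\Zpos^d$, so that the repeated use of \pref{lem:expansion} is well-typed.
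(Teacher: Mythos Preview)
Your proof is correct and follows exactly the approach sketched in the paragraph preceding the lemma in the paper: apply \pref{lem:expansion} with the shift $x'$ (or equivalently $\alpha x'$) and iterate by transitivity. The paper does not give a separate formal proof beyond that sentence, so your write-up is in fact more detailed than the original.
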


Given $x,x'\in\Zpos^d$ and $\lambda\in\Zpos^n$ such that $x' = x+\lambda\V$, it is sufficient (but
not necessary) to show that there exists $\pi\in\orderings(\lambda)$ verifying
$\forall j\in[1;n], \lp(x\apply\pi) \succeq \O_j$ to conclude that $x\rightarrow^* x'$.
We remark finally that $\lp(x\apply\pi) = x + \lp(\zero\apply\pi)$.

\section{Deciding Irreducibility}
\label{sec:irreducibility}

DRN $(\V,\O)$ is irreducible if any point in $\Zpos$ can be reached from any other point in $\Zpos$
(\pref{def:irreducibility}).
We first address the LCN irreducibility, and then exhibit supplementary properties that lead to 
full irreducibility.

\subsection{LCN Irreducibility}

Recall that DRN $(\V,\O)$ is LCN irreducible if and only if any point above a certain $M_0\in\Zpos^d$ can be
reached from any other point above $M_0$ (\pref{def:lcn-irreducibility}).

Before using the LCN hypothesis, we remark that the DRN is irreducible if (and only if) one can
reach each elementary point $e_i, \forall i\in[1;d]$ ($e_i$ is the $d$-dimensional vector having $0$ at
each of its component, except the $i^{\text{th}}$ component being $1$) from $\zero$ and vice-versa
(\pref{lem:irreducibility-ei}).

\begin{lemma}
\label{lem:irreducibility-ei}
DRN $(\V,\O)$ is irreducible if and only if $\forall i\in[1;d]$,
$\zero \rightarrow^* e_i$ and $e_i\rightarrow^*\zero$.
\end{lemma}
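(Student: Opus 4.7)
The forward direction is immediate: if $(\V,\O)$ is irreducible, then by \pref{def:irreducibility} applied to $x=\zero$ and $x'=e_i$ we directly get $\zero\rightarrow^* e_i$ and $e_i\rightarrow^*\zero$ for every $i\in[1;d]$. So the whole content is in the converse.

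For the converse, the plan is to show that under the stated hypothesis we can reach any $x\in\Zpos^d$ from $\zero$ and also reach $\zero$ from any $x\in\Zpos^d$; irreducibility then follows by transitivity of $\rightarrow^*$. Decompose $x = \sum_{i=1}^d x_i e_i$ with $x_i\in\Zpos$. Starting from $\zero\rightarrow^* e_i$ and invoking \pref{lem:scale}, we get $\zero \rightarrow^* x_i e_i$ for each $i$. Then I would build up $x$ coordinate by coordinate: applying \pref{lem:expansion} with $\delta = x_1 e_1$ to $\zero\rightarrow^* x_2 e_2$ yields $x_1 e_1 \rightarrow^* x_1 e_1 + x_2 e_2$, and by induction on the number of nonzero coordinates we chain
\[
\zero \rightarrow^* x_1 e_1 \rightarrow^* x_1 e_1 + x_2 e_2 \rightarrow^* \cdots \rightarrow^* \sum_{i=1}^d x_i e_i = x.
\]

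For the symmetric reachability $x\rightarrow^*\zero$, note that \pref{lem:scale} is stated only for paths originating at $\zero$, so a direct analogue is not available; this is the only place that requires a little care. I would recover the needed $x_i e_i \rightarrow^*\zero$ by iterating \pref{lem:expansion} on the hypothesis $e_i\rightarrow^*\zero$: applying expansion with $\delta=(k-1)e_i$ gives $k e_i \rightarrow^* (k-1) e_i$, and induction on $k$ yields $x_i e_i \rightarrow^*\zero$. Then, analogously to the construction above but peeling off one coordinate at a time, I would use \pref{lem:expansion} with $\delta = \sum_{j>i} x_j e_j$ to strip $x_i e_i$ from the current state, producing the chain $x \rightarrow^* \sum_{i\ge 2} x_i e_i \rightarrow^* \cdots \rightarrow^*\zero$.

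Combining the two halves, for arbitrary $x,x'\in\Zpos^d$ we get $x \rightarrow^*\zero\rightarrow^* x'$, which proves irreducibility. The only mild obstacle is the missing ``reverse'' version of \pref{lem:scale}, but \pref{lem:expansion} supplies it directly, so the argument is essentially a clean induction on the dimension $d$ carried out twice, once ascending from $\zero$ and once descending to $\zero$.
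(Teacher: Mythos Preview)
Your proof is correct. The paper states this lemma without proof, evidently regarding it as an immediate consequence of \pref{lem:expansion} and \pref{lem:scale}; your argument supplies exactly those details, building $\zero\rightarrow^* x$ coordinate by coordinate and then descending $x\rightarrow^*\zero$ by iterating \pref{lem:expansion}. As a minor simplification, the iterated-expansion trick you use for the descent (from $e_i\rightarrow^*\zero$ shift by $(k-1)e_i$) works equally well for the ascent (from $\zero\rightarrow^* e_i$ shift by $(k-1)e_i$ to get $(k-1)e_i\rightarrow^* k e_i$), so \pref{lem:scale} is not strictly needed and the two halves become fully symmetric.
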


Note that a necessary condition for LCN irreducibility is that $\mspan_{\Zpos} \V = \Z^d$.
This property is actually sufficient for LCN irreducibility (\pref{lem:LCNir-posZspan}) by
choosing $M_0$ big enough such that for any $i\in[1;d]$ at least one reachability path from $M_0$ to $M_0 \pm e_i$
never goes below $\zero$, and such that $M_0$ is greater than all the reaction origins.

Remarking that $\mspan_{\Qspos} \V = \Q^d \Leftrightarrow \mspan_{\Rspos}\V = \R^d$
(\pref{lem:RQ}), \pref{thm:lcn-irreducibility} establishes that verifying
$\mspan_{\Zpos} \V =\Z^d$ is
equivalent to verifying
both $\mspan_{\Z} \V = \Z^d$ and $\mspan_{\Rspos}\V = \R^d$.

While the verification of $\mspan_{\Zpos}\V = \Z^d$ involves integer programming techniques,
verifying if $\mspan_{\Rspos} \V = \R^d$ and $\mspan_{\Z} \V = \Z^d$ can be done more
efficiently:
the former can be decided using linear programming, for instance by first checking if
$\zero\in\mspan_{\Rspos}\V$ and then if $\mspan_{\Rpos}\V = \R^d$;
the latter can be decided, for instance, by computing the Hermite normal form of $\V$
\citep{Cohen93}.

\begin{lemma}\label{lem:LCNir-posZspan}
DRN $(\V,\O)$ is LCN irreducible
$\Longleftrightarrow \mspan_{\Zpos} \V = \Z^d$.
\end{lemma}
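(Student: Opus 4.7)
The plan is to prove both directions separately, with almost all the technical work going into the sufficiency.

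For ($\Leftarrow$), assume $\mspan_{\Zpos}\V = \Z^d$. For each $i\in[1;d]$ I would pick $\lambda^+_i, \lambda^-_i\in\Zpos^n$ with $\lambda^+_i\V = e_i$ and $\lambda^-_i\V = -e_i$, together with any orderings $\pi^+_i\in\orderings(\lambda^+_i)$ and $\pi^-_i\in\orderings(\lambda^-_i)$. Let $L\in\Zpos^d$ be the componentwise maximum of the rows $\O_1,\dots,\O_n$, and choose $M_0\in\Zpos^d$ componentwise large enough that
\[
M_0 + \lp(\zero \apply \pi^+_i) \succeq L \qquad\text{and}\qquad M_0 + \lp(\zero \apply \pi^-_i) \succeq L
\]
for every $i\in[1;d]$, which is possible since the lowerpoints are fixed elements of $\Z^d$. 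Then, for $x,x'\succeq M_0$, decompose $x'-x\in\Z^d$ into its componentwise positive and negative parts $z^+,z^-\in\Zpos^d$, and build a path from $x$ to $x'$ by concatenating each $\pi^+_i$ a total of $z^+_i$ times (ascending phase), followed by each $\pi^-_i$ a total of $z^-_i$ times (descending phase).

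Two facts then need verification. First, every endpoint between consecutive sub-paths is $\succeq M_0$: during the ascending phase the running point only grows beyond $x$, and during the descending phase the running point equals $x'$ plus the part of $z^-$ still to be subtracted, hence is $\succeq x' \succeq M_0$. Second, each sub-path started at some $y\succeq M_0$ is applicable by the sufficient criterion recalled at the end of \pref{sec:defs}, because
\[
\lp(y\apply\pi^{\pm}_i) = y + \lp(\zero\apply\pi^{\pm}_i) \succeq M_0 + \lp(\zero\apply\pi^{\pm}_i) \succeq L \succeq \O_j
\]
for every $j\in[1;n]$. Composing the sub-paths yields $x\rightarrow^* x'$, establishing LCN irreducibility.

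For ($\Rightarrow$), assume LCN irreducibility with threshold $M_0$. Applying the definition to the pair $(M_0, M_0+e_i)$ produces some $\lambda\in\Zpos^n$ with $\lambda\V=e_i$, so $e_i\in\mspan_{\Zpos}\V$; the pair $(M_0+e_i, M_0)$ analogously yields $-e_i\in\mspan_{\Zpos}\V$. Since $\mspan_{\Zpos}\V$ is closed under nonnegative integer sums and contains $\{e_i,-e_i\}_{i\in[1;d]}$, every $z\in\Z^d$ lies in $\mspan_{\Zpos}\V$ via its componentwise positive/negative decomposition, giving $\mspan_{\Zpos}\V=\Z^d$.

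The main obstacle is the sufficiency: a single threshold $M_0$ must handle every pair $(x,x')$ above it, while the composite path must simultaneously respect origin constraints inside each sub-path and keep the connecting intermediate points above $M_0$. The \emph{ascend first, descend second} ordering is what makes this simultaneous control possible without any case analysis on $x$ and $x'$, and the existence of the $\lambda^{\pm}_i$ is precisely what the hypothesis $\mspan_{\Zpos}\V=\Z^d$ provides.
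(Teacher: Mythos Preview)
Your proof is correct and follows the same approach as the paper: for $(\Leftarrow)$ both pick $\lambda^{i,\pm}\in\Zpos^n$ with $\lambda^{i,\pm}\V=\pm e_i$, choose orderings $\pi^{i,\pm}$, and take $M_0$ large enough that $M_0+\lp(\zero\apply\pi^{i,\pm})\succeq\O_j$ for all $j$. Your write-up is in fact more complete than the paper's, which omits the $(\Rightarrow)$ direction entirely and stops at $M_0\rightarrow^* M_0\pm e_i$ without spelling out the ascend-then-descend concatenation you make explicit.
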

\begin{proof}
$\mspan_{\Zpos} \V = \Z^d \Rightarrow \forall i\in[1;d],
	\exists \lambda^{i,+},\lambda^{i,-} \in \Zpos^n:
		\inner{\lambda^{i,+}}{\V} = e_i
		\wedge \inner{\lambda^{i,-}}{\V} = -e_i$.

For each $i\in[1;d]$ and $s\in\{+,-\}$,
we pick an arbitrary ordering $\pi\in\orderings(\lambda^{i,s})$.

If $M_0$ is defined such that
$\forall i\in[1;d],\forall s\in\{+,-\},\forall j\in[1;n]$,
$M_0 + \lp(\zero\apply\pi^{i,s}) \succeq \O_j$,
then 
it is clear that $\forall i\in[1;d]$,
	$M_0\rightarrow^* M_0+e_i$ and $M_0+e_i \rightarrow^* M_0$.
\end{proof}

\begin{lemma}
$\mspan_{\Rspos} \V = \R^d \Leftrightarrow \mspan_{\Qspos} \V = \Q^d$.
\label{lem:RQ}
\end{lemma}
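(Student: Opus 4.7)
The plan is to prove both implications by exploiting the integrality of $\V$ to pass between real and rational coefficients. In each direction, the hypothesis already places $\zero$ in the corresponding strictly positive span, and this null combination is the key ingredient for adjusting signs.

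For the direction $\mspan_{\Qspos}\V = \Q^d \Rightarrow \mspan_{\Rspos}\V = \R^d$, I would argue as follows. Given any $x \in \R^d$, I first find some (possibly non-positive) $\lambda \in \R^n$ with $\lambda\V = x$; this is possible because the hypothesis forces the rows of $\V$ to span $\Q^d$ over $\Q$, hence also to span $\R^d$ over $\R$. Next, I extract a witness $\tau \in \Qspos^n$ of $\zero = \tau\V$, which exists because $\zero \in \Q^d = \mspan_{\Qspos}\V$. Then $(\lambda + c\tau)\V = x$ for every $c\in\R$, and choosing $c$ sufficiently large makes every component of $\lambda + c\tau$ strictly positive, since $\tau$ itself is strictly positive. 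Hence $x \in \mspan_{\Rspos}\V$.

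For the converse $\mspan_{\Rspos}\V = \R^d \Rightarrow \mspan_{\Qspos}\V = \Q^d$, fix $q \in \Q^d \subset \R^d$ and pick some $\lambda \in \Rspos^n$ with $\lambda\V = q$ provided by the hypothesis. The affine set $S = \{\mu \in \R^n : \mu\V = q\}$ is defined by a linear system with integer coefficients and rational right-hand side, so it is a rational affine subspace of $\R^n$; in particular, its rational points are dense in $S$. The subset of $S$ on which every component is strictly positive is open in $S$ and nonempty (it contains $\lambda$), so by density it contains a rational point $\mu \in \Qspos^n$, giving $q \in \mspan_{\Qspos}\V$.

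The most delicate step is the density of rational points in $S$, which is established by writing $S$ as a particular rational solution (obtainable by Gaussian elimination on the integer system) plus the kernel of an integer matrix (a $\Q$-subspace admitting a rational basis). Apart from that, the proof reduces to elementary sign adjustment in each direction, so I expect no real obstacle beyond setting up the witnesses $\tau$ and $\lambda$ cleanly.
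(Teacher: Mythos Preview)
Your proof is correct, and both directions are handled cleanly. The approach, however, differs from the paper's in the nontrivial direction $\mspan_{\Rspos}\V=\R^d\Rightarrow\mspan_{\Qspos}\V=\Q^d$.

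You argue via the density of rational points in the rational affine subspace $S=\{\mu:\mu\V=q\}$: a rational particular solution exists because a rational linear system solvable over $\R$ is solvable over $\Q$, the kernel has a rational basis, and then any nonempty relatively open subset of $S$ meets $\Q^n$. The paper instead fixes a Hamel basis $(\beta_\alpha)$ of $\R$ over $\Q$ with $\beta_{\alpha_0}=1$, expands each real coefficient $\lambda_j=\sum_\alpha\lambda_j^\alpha\beta_\alpha$ with $\lambda_j^\alpha\in\Q$, and uses the integrality of $\V$ to see that the equation $\lambda\V=w$ splits along the basis: the $\alpha_0$-component gives a rational solution $\lambda^{\alpha_0}\V=w$, and every other component lies in the kernel. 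Replacing each irrational $\beta_\alpha$ by a nearby rational $\tilde\beta_\alpha$ then yields rational, still strictly positive coefficients. Unwound, this is precisely your density argument (rational particular solution plus rational kernel, then perturb), but routed through a Hamel basis; your version avoids the axiom of choice and is arguably more transparent. For the direction $\mspan_{\Qspos}\V=\Q^d\Rightarrow\mspan_{\Rspos}\V=\R^d$, the paper does not spell out an argument, and your ``add a large multiple of a strictly positive null combination'' is the natural one.
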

\begin{proof}
Let us consider $\lambda\in\Rspos^n$ such that $\inner{\lambda}{\V} = w$, where $w\in\Q^d$.

Considering a basis $(\beta_\alpha)_{\alpha\in I}$ of $\R$ over $\Q$ such that $\beta_{\alpha_0} =
1$
(i.e. $\forall r\in\R, \exists\text{ a unique choice of }r^\alpha\in\Q: 
r=\sum_{\alpha\in I} r^\alpha \beta_\alpha$).
Then $\inner{\lambda}{\V}= \sum_{j=1}^n \lambda_j \V_j
= \sum_{j=1}^n (\sum_{\alpha\in I} \lambda^\alpha_j \beta_\alpha) \V_j
= \sum_{\alpha\in I} (\sum_{j=1}^n \lambda^\alpha_j \V_j)\beta_\alpha = w$
with $\lambda^\alpha\in \Q$.
On the other hand, $w = w\beta_{\alpha_0} + \sum_{\alpha\in I\setminus \{\alpha_0\}} 0\beta_\alpha$.
Hence, $\sum_{j=1}^n \lambda^{\alpha_0}_j \V_j = w$ and
$\forall \alpha\in I, \alpha\neq\alpha_0, \sum_{j=1}^n \lambda^\alpha_j \V_j = \zero$.

Therefore, $w = \sum_{j=1}^n \lambda_j^{\alpha_0} \V_j +
		\sum_{\alpha\in I \setminus \{\alpha_0\}} (\sum_{j=1}^n \lambda_j^{\alpha} \V_j )
		\tilde\beta_\alpha = 
		\sum_{\alpha\in I} (\sum_{j=1}^n \lambda_j^{\alpha} \V_j) \tilde\beta_\alpha =
		\sum_{j=1}^n (\sum_{\alpha\in I} \lambda_j^{\alpha}\tilde\beta_\alpha) \V_j$
with $\tilde\beta_{\alpha_0} = 1$ and $\forall \alpha\in I, \alpha\neq \alpha_0,
\tilde\beta_{\alpha} \in \Q$.
\end{proof}

\begin{theorem}
\label{thm:lcn-irreducibility}
$\mspan_{\Zpos} \V = \Z^d
\Longleftrightarrow
\mspan_{\R_{>0}} \V = \R^d \text{ and }
\mspan_{\Z} \V = \Z^d$
\end{theorem}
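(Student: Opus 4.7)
The plan is to prove each direction of \pref{thm:lcn-irreducibility} separately, leveraging \pref{lem:RQ} to pass between $\mspan_{\Rspos}$ and $\mspan_{\Qspos}$ in the reverse implication.

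For the forward direction, I would assume $\mspan_{\Zpos}\V = \Z^d$. The conclusion $\mspan_{\Z}\V = \Z^d$ is immediate since $\Zpos\subset\Z$. For $\mspan_{\Rspos}\V = \R^d$, I would split the argument in two steps. First, I would show $\mspan_{\Rpos}\V = \R^d$ by decomposing any $w\in\R^d$ along the canonical basis of $\R^d$ and replacing each $\pm e_i$ by the $\Zpos$-combination of rows of $\V$ supplied by the hypothesis; this exhibits $w$ as an $\Rpos$-combination of rows of $\V$. Second, I would produce a strictly positive witness $\mu\in\Zspos^n$ with $\inner{\mu}{\V}=\zero$. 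To this end, for each reaction $j\in[1;n]$ the hypothesis yields some $\mu^j\in\Zpos^n$ with $\inner{\mu^j}{\V}=-\V_j$; letting $e_j$ denote the $j$-th canonical basis vector of $\R^n$, the sum $\mu^j+e_j\in\Zpos^n$ has $j$-th coordinate at least $1$ and satisfies $\inner{(\mu^j+e_j)}{\V}=\zero$. Summing over all $j$ yields the required $\mu\in\Zspos^n$. Combining both steps, any $w=\inner{\lambda}{\V}$ with $\lambda\in\Rpos^n$ also equals $\inner{(\lambda+\mu)}{\V}$ with $\lambda+\mu\in\Rspos^n$, whence $\mspan_{\Rspos}\V=\R^d$.

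For the reverse direction, I would assume $\mspan_{\Rspos}\V = \R^d$ and $\mspan_{\Z}\V = \Z^d$, and fix $w\in\Z^d$. The second hypothesis furnishes $\lambda\in\Z^n$ with $\inner{\lambda}{\V}=w$, but $\lambda$ may have negative components. Applying \pref{lem:RQ} to the first hypothesis gives $\zero\in\mspan_{\Qspos}\V$; clearing denominators then yields $\mu\in\Zspos^n$ with $\inner{\mu}{\V}=\zero$. I would then pick $k\in\Zpos$ large enough so that every coordinate of $\lambda+k\mu$ is non-negative (possible since $\mu$ has strictly positive integer entries), obtaining $\lambda+k\mu\in\Zpos^n$ with $\inner{(\lambda+k\mu)}{\V}=w$, as required.

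The main obstacle I expect is the strict-positivity step in the forward direction: a naive attempt summing $\Zpos$-witnesses of the various $\pm e_i$ only yields an element of $\Zpos^n$, not of $\Zspos^n$, because a particular reaction could fail to appear in any chosen witness. The per-reaction correction $\mu^j+e_j$ above is the key trick that guarantees each coordinate of the final $\mu$ is at least $1$, which is precisely what is needed for the cone $\mspan_{\Rspos}\V$ (and not merely $\mspan_{\Rpos}\V$) to exhaust $\R^d$.
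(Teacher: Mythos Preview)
Your proposal is correct and follows essentially the same approach as the paper. The reverse direction is identical to the paper's argument, and your forward direction is simply a careful unpacking of what the paper dismisses as ``straightforward by remarking that $\mspan_{\Zpos}\V=\Z^d\Rightarrow\mspan_{\Zspos}\V=\Z^d$'': your per-reaction correction $\mu^j+e_j$ is exactly the trick needed to justify that remark.
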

\begin{proof}
($\Leftarrow$)
$\mspan_{\R_{>0}} \V = \R^d \Leftrightarrow \mspan_{\Q_{>0}} \V = \Q^d$ (\pref{lem:RQ}).
Therefore, $\exists\lambda\in\Qspos^n$ such that 
	$\inner{\lambda}{\V} = \zero$ and $\exists \alpha\in\Z_{>0}$ such that
$\alpha\lambda\in\Z_{>0}^d$.
Moreover, $\forall i\in [1;d]$ and $\forall s\in\{+,-\}$,
	$\exists\lambda^{i,s} \in \Z^n$ such that
	$\inner{\lambda^{i,s}}{\V} = s e_i$.
Hence, there exists $\beta\in\Z_{>0}$ such that
	$\lambda^* = \beta\alpha\lambda + \lambda^{i,s}$
	with $\lambda^*\in \Zpos^d$,
	resulting in $\inner{\lambda^*}{\V} = s e_i$.
($\Rightarrow$)
straightforward by remarking that
$\mspan_{\Zpos} \V = \Z^d \Rightarrow \mspan_{\Zspos} \V = \Z^d$.
\end{proof}

\begin{example}
One can check that both examples of \pref{fig:examples} verify
$\mspan_{\Rspos}\V = \R^d$.
However, the computation of Hermite normal forms shows that only example (b) verifies the second necessary condition 
$\mspan_{\Z}\V = \Z^d$.
Hence, example (a) is not LCN irreducible whereas example (b) is LCN irreducible.
\end{example}

\subsection{Full Irreducibility}

In this subsection, we demonstrate that the DRN is totally irreducible if and only if the DRN is LCN
irreducible and is both \emph{self-starting} (\pref{def:self-starting}) and \emph{self-stopping}
(\pref{def:self-stopping}).
A DRN is self-starting if at least one strictly positive point can be reached from $\zero$, and is
self-stopping if there exists at least on strictly positive point from which $\zero$ can be reached
-- which is equivalent to the inverse DRN being self-starting.

\begin{definition}[Self-starting DRN]
\label{def:self-starting}
DRN $(\V,\O)$ is \emph{self-starting} if and only if
$\exists x \in \Zspos^d \text{ such that } \zero\rightarrow^* x$.
\end{definition}

\begin{definition}[Self-stopping DRN]
\label{def:self-stopping}
DRN $(\V,\O)$ is \emph{self-stopping} if and only if
inverse DRN $(\V,\O)^{-1}$ is self-starting.
\end{definition}

\pref{lem:self-starting} establishes that a DRN is self-starting if and only if there exists a
sequence of $d$ reactions (not necessarily unique) such that for each dimension at least one
reaction of this sequence has a positive drift along that dimension,
and such that the origin of the $k^{\text{th}}$ reaction belongs to the positive real span of the
$k-1$ preceding drift vectors (the first reaction having necessarily $\zero$ as origin).
Therefore, one can derive a backtrack algorithm to determine if such an ordering of reactions
exists.

Then, \pref{thm:irreducibility} states that if a LCN irreducible DRN is both self-starting and
self-stopping then it is irreducible.
Indeed, if the DRN is self-starting, then there exists a strictly positive point $x\in\Zspos$
such that $\zero\rightarrow^* x$. 
From \pref{lem:expansion}, this implies that $\zero\rightarrow^* x+M_0$.
Finally if the DRN is also self-stopping, one can easily show that there exists a point $x'\succeq
M_0$ such that $x' \rightarrow^* \zero$.
Because the DRN is LCN recurrent, we know that any pair of points above $M_0$ is reversibly
reachable.
Hence, by using \pref{lem:expansion}, one can verify the existence of a reversible path from
$\zero$ to all $e_i, i\in[1;d]$.

Informally, the self-starting property allows to reach the LCN region, and the self-stopping
allows to reach any $\pm e_i$ or $\zero$ from any point in the LCN region.
The LCN irreducibility property finally ensures that those two paths can be connected.
This is illustrated in \pref{fig:irreducibility}.

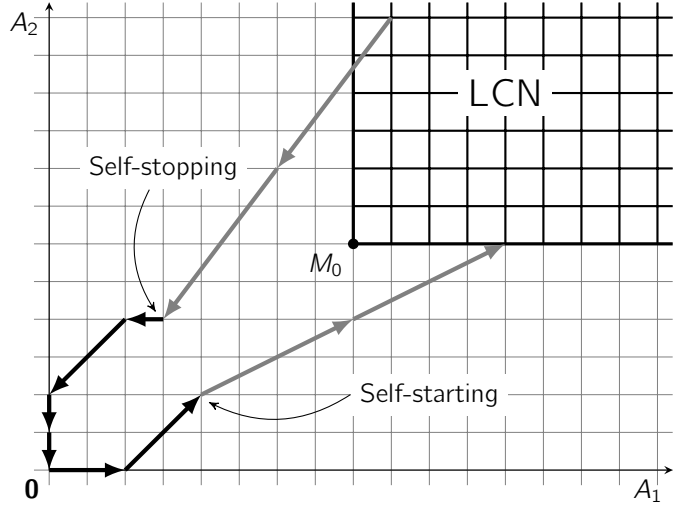
\begin{figure}
\centering

\begin{tikzpicture}

\draw[help lines] (-0.2,-0.2) grid[step=5mm] (8.2,6.2);

\draw[axe] (0,0) -- (8.2,0);
\draw[axe] (0,0) -- (0,6.2);
\node[anchor=north east] at (8.2,0) {$A_1$};
\node[anchor=north east] at (0,6.2) {$A_2$};
\node[anchor=north east] at (0,0) {$\zero$};

\draw[thick] (4,3) grid[step=5mm] (8.2,6.2);
\draw[very thick] (4,3) -- (8.2,3);
\draw[very thick] (4,3) -- (4,6.2);
\fill (4,3) circle (2pt);
\node[anchor=north east] at (4,3) {$M_0$};

\node[font=\Large,fill=white] at (6,5) {LCN};

\draw[vector,gray] (2,1) -> (4,2);
\draw[vector,gray] (4,2) -> (6,3);
\draw[vector] (0,0) -> (1,0);
\draw[vector] (1,0) -> (2,1);

\node[fill=white] (start) at (5,1) {Self-starting};
\path[->,>=stealth'] (start.west) edge[bend left] (2.1,0.9);

\draw[vector,gray] (3,4) -> (1.5,2);
\draw[vector,gray] (4.5,6) -> (3,4);
\draw[vector] (1.5,2) -> (1,2);
\draw[vector] (1,2) -> (0,1);
\draw[vector] (0,1) -> (0,0.5);
\draw[vector] (0,0.5) -> (0,0);

\node[fill=white] (stop) at (1.5,4) {Self-stopping};
\path[->,>=stealth'] (stop) edge[bend right] (1.4,2.1);

\end{tikzpicture}

\caption{
Illustration of the reasonning for \pref{thm:irreducibility} on irreducibility.
If the DRN is self-starting, by repeating the reactions, we eventually reach the LCN region from
$\zero$.
In the same manner, if the DRN is self-stopping, we eventually reach $\zero$ from a point in the LCN
region.
If the DRN is LCN irreducible, any point in the LCN region can be reached by any other point in the
LCN region.
In such a setting, one can construct a path from $\zero$ to each elementary vector, and vice-versa.
}
\label{fig:irreducibility}
\end{figure}

\def\sel{\sigma}
\begin{lemma}
\label{lem:self-starting}
$(\exists x \in \Zspos^d \text{ s.t. } \zero\rightarrow^* x) \Longleftrightarrow \exists \sel: [1;d] \mapsto [1;n]$
with:
\begin{enumerate}
\item $\forall k\in[1;d], \exists i\in[1;d], \V_{\sel(i),k} \geq 1$, and
\item  $\O_{\sel(1)} = \zero$ and
		$\forall k\in[2;d], \O_{\sel(k)} \in \mspan_{\Rpos} 
			\left(
\begin{array}{c}
\V_{\sel(1)}\\
\vdots\\
\V_{\sel(k-1)}
\end{array}
\right)$.
\end{enumerate}
\label{lem:0M0}
\end{lemma}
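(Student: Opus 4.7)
The plan is to prove the two directions of the equivalence separately, with the $(\Leftarrow)$ direction constructing a valid reachability path from the abstract selection, and $(\Rightarrow)$ extracting such a selection from the existence of a self-starting path.

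For $(\Leftarrow)$, the key steps are as follows. First, I would rationalise condition~(2): by adapting the argument of \pref{lem:RQ} to $\mspan_{\Rpos}$ (handling zero coefficients by restricting to the support of the decomposition), obtain $\beta_i^{(k)} \in \Qpos$ with $\O_{\sel(k)} = \sum_{i<k} \beta_i^{(k)} \V_{\sel(i)}$, and clear a common denominator $q$ to get integer multipliers $\alpha_i^{(k)} \in \Zpos$ with $q\,\O_{\sel(k)} = \sum_{i<k} \alpha_i^{(k)} \V_{\sel(i)}$. Combining this with the DRN constraint $\V_{\sel(k)} + \O_{\sel(k)} \succeq \zero$ yields the recursive lower bound $q\,\V_{\sel(k)} \succeq -\sum_{i<k} \alpha_i^{(k)} \V_{\sel(i)}$, which quantifies how much ``stock'' of earlier drifts is needed to absorb the negative part of $\V_{\sel(k)}$. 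Next, I would inductively build a firing schedule: at stage $k$, starting from a state already reached using $\sel(1), \ldots, \sel(k-1)$, I would interleave firings of $\sel(k)$ with re-stocking firings of the earlier reactions, choosing multiplicities large enough (relative to the $\alpha_i^{(k)}$'s) that every intermediate state dominates $\O_{\sel(k)}$; the sufficiency criterion after \pref{lem:scale} would let me certify validity via a $\lp(\zero\apply\pi) \succeq \O_j$ check. Finally, by scaling the multiplicities further and invoking condition~(1), I would ensure that every coordinate of the end-state is strictly positive.

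For $(\Rightarrow)$, I would construct $\sel$ iteratively using the self-starting witness. Set $\sel(1)$ to be any reaction with $\O_{\sel(1)} = \zero$; this must exist because any path from $\zero$ begins with such a reaction. For $k \geq 2$, I would inspect the reaction set and choose $\sel(k)$ so that $\O_{\sel(k)} \in \mspan_{\Rpos}(\V_{\sel(1)}, \ldots, \V_{\sel(k-1)})$, preferring reactions whose drift covers a coordinate not yet covered. The candidate set is always non-empty (it contains $\sel(1)$ itself, since $\zero$ lies in any such span), and the self-starting hypothesis forces that, as $k$ grows, the collection of covered coordinates must eventually reach all $d$ of them: otherwise, the reachable set from $\zero$ would be confined to a sub-cone missing some positive coordinate direction, contradicting the existence of $x \succ \zero$.

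The main technical obstacle lies in the combinatorial scheduling in $(\Leftarrow)$: although the recursive bound shows that each firing of $\sel(k)$ can be ``paid for'' by enough firings of the earlier reactions, translating this into an explicit valid ordering requires careful bookkeeping, especially because the drifts $\V_{\sel(i)}$ may have mixed signs and the re-stocking interactions are layered across the $d$ levels. A secondary subtlety in $(\Rightarrow)$ is making the coverage argument precise: one must show that failure to cover all coordinates after $d$ iterations would genuinely prevent reaching a strictly positive state from $\zero$, which requires a careful cone-closure argument on the reactions compatible with the partial selection.
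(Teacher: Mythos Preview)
Your $(\Leftarrow)$ direction would work but is more laborious than the paper's. The paper never rationalises the coefficients in condition~(2); the only fact it extracts from $\O_{\sigma(k)}\in\mspan_{\Rpos}(\V_{\sigma(1)},\ldots,\V_{\sigma(k-1)})$ is that any coordinate on which $\O_{\sigma(k)}$ is strictly positive must already lie in the covered set $\Omega^{k-1}=\{j:\exists\,i<k,\ \V_{\sigma(i),j}\geq 1\}$, since a nonnegative combination of the $\V_{\sigma(i)}$ can only be positive on a coordinate where some $\V_{\sigma(i)}$ is. The induction then just scales: if $\zero\rightarrow^* x'$ with $x'\succeq x^{k-1}$ (the indicator of $\Omega^{k-1}$), then by \pref{lem:scale} some $\alpha x'$ dominating $\O_{\sigma(k)}$ is reachable, reaction $\sigma(k)$ fires once, and one further scaling makes the new state dominate $x^k$. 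No interleaved re-stocking schedule is needed; the ``layered bookkeeping'' you anticipate dissolves once you notice that only the \emph{support} of $\O_{\sigma(k)}$ matters.

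Your $(\Rightarrow)$ direction has a genuine gap, and it is not the secondary subtlety you label it. By selecting $\sigma$ abstractly from the full reaction set rather than from a specific witnessing trajectory, you discard the structure that makes the argument go through. The claim that ``otherwise the reachable set from $\zero$ would be confined to a sub-cone missing some positive coordinate direction'' is essentially the contrapositive of what you are proving, and the bare existence of $x\in\Zspos^d$ with $\zero\rightarrow^* x$ gives you no purchase on which reactions have origins in the cone of your chosen $\V_{\sigma(i)}$ until you commit to an actual path. The paper's construction is concrete: fix a path $\pi:[1;\ell]\to[1;n]$ from $\zero$ to $x$, let $\varsigma(k)$ be the first step index after $\varsigma(k-1)$ whose drift is strictly positive on some still-uncovered coordinate, and set $\sigma(k)=\pi(\varsigma(k))$. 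Minimality of the $\varsigma$-indices forces every intermediate state prior to step $\varsigma(k)$ to vanish on the uncovered coordinates, and since that intermediate state dominates $\O_{\sigma(k)}$, condition~(2) is read off directly. Condition~(1) holds because the path terminates in $\Zspos^d$, so every coordinate is eventually hit.
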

\begin{proof}
($\Leftarrow$) 
Let us define $\forall k\in[1;d], \Omega^k\DEF\{ j\in[1;d]\mid \exists i\in [1;k], \V_{\sel(i)} \geq 1 \}$
and $x^k$ such that $\forall i\in[1;d]$, $x^k_i=1 \EQDEF i\in\Omega^k$ and $x^k_i=0 \EQDEF i\notin\Omega^k$.
We show by induction that $\forall k\in[1;d],\exists x'\succeq x^k\text{ s.t. }\zero\rightarrow^* x'$:
\begin{itemize}

\item $k=1$: $\zero \rightarrow \V_{\sel(1)}$ with $\forall j\in\Omega^1$, $\V_{\sel(1),j} \geq 1$.

\item $k+1$: by induction, (2), and \pref{lem:scale}, $\exists \alpha\in\Zspos$ such that 
$\alpha x^k\geq \O_{\sel(k+1)}$ (with $\zero\rightarrow^* \alpha x^k$). 
Hence, $\alpha x^k\rightarrow \alpha x^k+\V_{\sel(k+1)}$.
We remark that if $\exists i\in \Omega^{k+1}$ such that $(\alpha x^k+\V_{\sel(k+1)})_i < 1$, then necessarily 
$i\in\Omega^k$.
Hence, $\exists \beta\in\Zspos$ such that $(\beta\alpha x^k+\V_{\sel(k+1)})\succeq x^{k+1}$.
Therefore, $\zero \rightarrow^* x'$ with $x'\succeq x^{k+1}$.

\end{itemize}
Finally, as $\Omega^d = [1;d]$, $\exists x\in \Zspos^d\text{ s.t. }\zero\rightarrow^* x$.

($\Rightarrow$)
\def\psel{\varsigma}
$\zero\rightarrow^* x \Rightarrow \exists \ell\in\Zspos,  \exists \pi: [1;\ell] \mapsto [1;n]$ with
	$\sum_{i=1}^\ell \V_{\pi(i)} \in \Zspos^d$,
	and $\forall i\in[1;\ell], \sum_{j=1}^{i-1} \V_{\pi(j)} \succeq \O_{\pi(i)}$.
Let us define $\psel:[1;d]\mapsto[1;\ell]$ iteratively, starting with $\psel(1)\DEF 1$ and $\forall k\in[2;d]$:
\begin{itemize}
\item with $\omega^k\DEF\{ j\in[1;d]\mid \nexists i\in [1;k-1], \V_{\pi(\psel(i)),j} \geq 1 \}$,
\item if $\omega^k = \emptyset$, $\psel(k)\DEF 1$;
\item otherwise, 
$\psel(k) \DEF \min\{ m\in[\psel(k-1)+1;\ell] \mid \exists j\in \omega^k, \V_{\pi(m),j} \geq 1 \}$.
We remark that this minimum necessarily exists (otherwise $x \notin \Zspos^d$), and
$\forall m\in[\psel(k-1);\psel(k)-1]$,
	$\sum_{j=1}^m \pi(j) \in \mspan_{\Rpos}
			\left(
\begin{array}{c}
\V_{\sel(1)}\\
\vdots\\
\V_{\sel(k-1)}
\end{array}
\right)$.
\end{itemize}
From construction, $\sel\DEF\psel\circ\pi$ verifies (1) and (2).

\end{proof}

\begin{theorem}
\label{thm:irreducibility}
DRN $(\V,\O)$ is irreducible if and only if $(\V,\O)$ is LCN irreducible and
	$\exists x\in\Zspos^d \text{ s.t. } \zero \rightarrow^*_{(\V,\O)} x$
	and $\exists x'\in\Zspos^d \text{ s.t. } \zero \rightarrow^*_{(\V,\O)^{-1}} x'$
(i.e. $(\V,\O)$ is self-starting and self-stopping).
\end{theorem}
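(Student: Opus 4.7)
My plan is to verify the two directions of the equivalence, with most of the work in $(\Leftarrow)$, since $(\Rightarrow)$ follows almost directly from the definitions together with \pref{lem:inverse}.

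For $(\Rightarrow)$, I would simply observe that if $(\V,\O)$ is irreducible then in particular $\zero\rightarrow^*(1,\dots,1)\in\Zspos^d$, giving self-starting; applying the same argument to the irreducible inverse DRN (which is irreducible by \pref{lem:inverse}) gives self-stopping; and LCN irreducibility is an immediate weakening of irreducibility by restricting the quantifiers.

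For $(\Leftarrow)$, by \pref{lem:irreducibility-ei} it suffices to show that $\zero\rightarrow^* e_i$ and $e_i\rightarrow^*\zero$ for every $i\in[1;d]$. The idea is to build a ``bridge'' through the LCN region. First, from self-starting I get some $x\in\Zspos^d$ with $\zero\rightarrow^* x$, and by \pref{lem:scale} I obtain $\zero\rightarrow^* \alpha x$ for every $\alpha\in\Zspos$, so by choosing $\alpha$ large enough I get a point $x^\star\succeq M_0$ with $\zero\rightarrow^* x^\star$. Symmetrically, self-stopping together with \pref{lem:inverse} and the inverse-DRN version of \pref{lem:scale} produces a point $y^\star\succeq M_0$ with $y^\star\rightarrow^*\zero$.

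Now I combine these ingredients. For a fixed $i\in[1;d]$, starting from $y^\star\rightarrow^*\zero$ and shifting by $e_i\in\Zpos^d$ via \pref{lem:expansion} yields $y^\star+e_i\rightarrow^* e_i$; since $y^\star+e_i\succeq M_0$ and $x^\star\succeq M_0$, LCN irreducibility gives $x^\star\rightarrow^* y^\star+e_i$, and concatenating produces the chain
\[
\zero\rightarrow^* x^\star \rightarrow^* y^\star+e_i \rightarrow^* e_i.
\]
The reverse direction $e_i\rightarrow^*\zero$ is obtained by the symmetric chain: shift $\zero\rightarrow^* x^\star$ by $e_i$ to get $e_i\rightarrow^* x^\star+e_i$, move inside the LCN region to $y^\star$ by LCN irreducibility, then apply $y^\star\rightarrow^*\zero$.

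The only real subtlety, which I expect to be the main bookkeeping step, is justifying the two ``lift to the LCN region'' arguments: one needs $\alpha x\succeq M_0$ and $\beta y\succeq M_0$ for components where $x_i$ or $y_i$ is strictly positive, which is exactly where the assumptions $x,y\in\Zspos^d$ (strict positivity in every coordinate) enter, making \pref{lem:scale} applicable componentwise. Once this is in place, all the remaining reachabilities are routine applications of \pref{lem:expansion} and transitivity of $\rightarrow^*$.
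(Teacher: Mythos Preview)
Your proposal is correct and follows essentially the same route as the paper: reduce to \pref{lem:irreducibility-ei}, use self-starting together with \pref{lem:scale} to reach the LCN region from $\zero$, use self-stopping (via \pref{lem:inverse} and \pref{lem:scale} applied to the inverse DRN) to descend from the LCN region to $\zero$, and bridge inside the LCN region by LCN irreducibility, shifting by $e_i$ via \pref{lem:expansion} where needed. The paper's chains pass through one extra intermediate point in the LCN region (e.g.\ $\alpha x \rightarrow^* \alpha x+e_i \rightarrow^* \alpha x'+e_i$ rather than your direct $x^\star\rightarrow^* y^\star+e_i$), but this is a cosmetic difference since LCN irreducibility gives all-to-all reachability above $M_0$.
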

\begin{proof}
($\Rightarrow$) obvious.

($\Leftarrow$)
If $(\V,\O)$ is LCN irreducible, there exists a minimum origin $M_0\in\Zspos^d$ such that 
$\forall x\succeq M_0,x'\succeq M_0$, $x\rightarrow^*_{(\V,\O)} x'$ and $x'\rightarrow^*_{(\V,\O)} x$.
In addition, $(\V,\O)^{-1}$ is LCN irreducible, with a minimum origin $M'_0\in\Zspos^d$.

From \pref{lem:scale}, $\exists \alpha\in\Zpos$ such that
$\alpha x \succeq M_0$, $\alpha x\succeq M'_0$,
$\alpha x' \succeq M_0$, and $\alpha x'\succeq M'_0$, 
with $\zero\rightarrow^*_{(\V,\O)} \alpha x$ and 
	$\zero\rightarrow^*_{(\V,\O)^{-1}} \alpha x'$.
Hence, $\forall i\in[1;d]$, from \pref{lem:expansion},
\begin{itemize}
\item $\zero\rightarrow^*_{(\V,\O)} \alpha x \rightarrow^*_{(\V,\O)} (\alpha x + e_i) \rightarrow^*_{(\V,\O)} (\alpha x' + e_i)
\rightarrow^*_{(\V,\O)} (\zero + e_i)$, and

\item $(\zero+e_i)\rightarrow^*_{(\V,\O)} (\alpha x + e_i) \rightarrow^*_{(\V,\O)} \alpha x\rightarrow^*_{(\V,\O)} \alpha x'
\rightarrow^*_{(\V,\O)} \zero$.

\end{itemize}
\end{proof}

\begin{example}
One can easily show that the two examples in \pref{fig:examples} are self-starting and
self-stopping.
Using LCN irreducibility criteria from the previous subsection, we conclude that example (b) is
irreducible
(recall that example (a) is not LCN irreducible, so it is not irreducible).
\end{example}

\section{Deciding Recurrence}
\label{sec:recurrence}

Recall that DRN $(\V,\O)$ is recurrent if and only if for all pair of points $x,x'\in\Zpos$, $x\rightarrow^* x'$
implies $x'\rightarrow^* x$ (\pref{def:recurrence}).
First, we show that the LCN recurrence is equivalent to the presence of the null vector in the
strictly positive real span of drift vectors.
Then, we discuss sufficient conditions to obtain the recurrence, and reduce the full
recurrence property to a set of reachability properties.

\subsection{LCN Recurrence}

Let us ignore reaction origins and population positivity constraints.
If $\zero\in\mspan_{\Zspos} \V$, it is clear that from any point $x$, one can undo any reaction
application and then go back to $x$:
$\zero\in\mspan_{\Zspos} \V \Rightarrow \exists \lambda\in\Zspos^n$ such that
$\inner{\lambda}{\V} = \zero$.
Hence $\forall i\in[1;d]$, let us define $\lambda'\in\Zpos^n$ with
$\lambda'_i = \lambda_i - 1$ and $\lambda'_k=\lambda_k, \forall k\in[1;d],k\neq i$:
we obtain $\inner{\lambda'}{\V} = -\V_i$.

By following the proof of \pref{lem:RQ}, we remark in \pref{lem:zRQ} that
$\zero\in\mspan_{\Qspos}\V$ (hence $\zero\in\mspan_{\Zspos}\V$) is equivalent to
$\zero\in\mspan_{\Rspos}\V$.
This can be verified with linear programming.

\begin{lemma}
\label{lem:zRQ}
$\zero\in\mspan_{\Qspos} \V \Longleftrightarrow \zero\in\mspan_{\Rspos}\V$.
\end{lemma}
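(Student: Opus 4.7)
My plan is to mirror the proof of \pref{lem:RQ} and specialise it to $w=\zero$, but with an additional continuity/approximation step to preserve strict positivity of the coefficients when passing from $\R$ to $\Q$.

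The direction $(\Rightarrow)$ is immediate, since $\Qspos^n\subset\Rspos^n$. For $(\Leftarrow)$, I would start from $\lambda\in\Rspos^n$ with $\inner{\lambda}{\V}=\zero$, fix a Hamel basis $(\beta_\alpha)_{\alpha\in I}$ of $\R$ over $\Q$ with $\beta_{\alpha_0}=1$, and expand $\lambda_j=\sum_{\alpha\in I}\lambda_j^\alpha\beta_\alpha$ with each $\lambda_j^\alpha\in\Q$ (only finitely many nonzero for each $j$). Swapping sums exactly as in \pref{lem:RQ} gives $\zero=\sum_\alpha(\sum_j\lambda_j^\alpha\V_j)\beta_\alpha$, and since each inner sum $\sum_j\lambda_j^\alpha\V_j$ lies in $\Q^d$, uniqueness of the Hamel expansion forces $\sum_{j=1}^n\lambda_j^\alpha\V_j=\zero$ for every $\alpha\in I$.

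The next step is the positivity recovery, which is the genuinely new ingredient compared to \pref{lem:RQ}. Let $I'\subset I$ be the finite union of the supports $\{\alpha\mid\lambda_j^\alpha\neq 0\}$ for $j\in[1;n]$, and assume $\alpha_0\in I'$ (adding it if necessary). For any rational family $(\tilde\beta_\alpha)_{\alpha\in I'}$ with $\tilde\beta_{\alpha_0}=1$, define $\tilde\lambda_j\DEF\sum_{\alpha\in I'}\lambda_j^\alpha\tilde\beta_\alpha\in\Q$; by the identity derived above, $\inner{\tilde\lambda}{\V}=\zero$ automatically. The map $(\tilde\beta_\alpha)_\alpha\mapsto(\tilde\lambda_j)_j$ is linear on the finite-dimensional space $\R^{I'}$, and returns $\lambda_j>0$ when $\tilde\beta_\alpha=\beta_\alpha$. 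By density of $\Q$ in $\R$ and continuity, I can choose $\tilde\beta_\alpha\in\Q$ close enough to $\beta_\alpha$ for each $\alpha\in I'\setminus\{\alpha_0\}$ so that $\tilde\lambda_j>0$ for all $j\in[1;n]$ simultaneously. Then $\tilde\lambda\in\Qspos^n$ witnesses $\zero\in\mspan_{\Qspos}\V$.

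The main obstacle is the positivity-preservation step: the construction in \pref{lem:RQ} only needed rational coefficients, whereas here the $\tilde\lambda_j$ must additionally remain strictly positive. This is resolved by exploiting the fact that only finitely many basis vectors in $I$ are relevant (because $n$ is finite and each $\lambda_j$ has finite Hamel support), reducing the perturbation problem to a finite-dimensional continuity argument. Everything else is a direct transcription of the earlier proof with $w$ taken to be $\zero$.
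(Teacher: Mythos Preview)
Your proof is correct and follows exactly the route the paper intends: the paper's proof of \pref{lem:zRQ} literally reads ``same proof as for \pref{lem:RQ} with $w=\zero$'', and that is precisely the Hamel-basis decomposition plus freedom-in-$\tilde\beta_\alpha$ argument you wrote out. One small remark: the positivity-preservation step you flag as ``the genuinely new ingredient compared to \pref{lem:RQ}'' is in fact already needed (and equally implicit) in the paper's proof of \pref{lem:RQ} itself, since that lemma also concerns strictly positive spans; you have simply made explicit the density/continuity argument the paper leaves to the reader in both places.
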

\begin{proof}
($\Rightarrow$) obvious.
($\Leftarrow$) same proof as for \pref{lem:RQ} with $w=\zero$.
\end{proof}

Finally, \pref{thm:lcn-recurrent} establishes that LCN recurrence is equivalent to
$\zero\in\mspan{\Rspos}\V$.
The main difficulty is to prove that there exists a $M_0\in\Zpos$ such that it is possible to
reverse all the reactions connecting any pair of points above $M_0$ by staying in $\Zpos$.
For that, we consider the basis $\B = \{ b_1, \dots, b_k \}$ of the free $\Z$-module generated by
$\V$.
It is worth noticing that, because $\zero\in\mspan_{\Zspos} \V$, $\forall i\in[1;k],
b_i\in\mspan_{\Zpos} \V$.
Let us pick $M_0$ large enough such that there exists a sequence of reactions from $M_0$ that
can be successively applied (i.e., never below their origins) and that goes to all the vertices
of the fundamental region formed by $\B$ that are adjacent to $M_0$.
Then any pair of points above $M_0$ that is connected can be reversibly reached from each other.
\pref{fig:lcn-recurrent} illustrates this reasoning.

\begin{figure}
\centering

\def\tile{-- ++(1,-1) -- ++(-1,-1) -- ++(-1,1) -- ++(1,1)}

\begin{tikzpicture}

\draw[help lines] (-0.2,-0.2) grid[step=5mm] (8.2,6.2);
\clip (-0.5,-0.5) rectangle (8.2,6.2);

\draw[axe] (0,0) -- (8.2,0);
\draw[axe] (0,0) -- (0,6.2);
\node[anchor=north east] at (8.2,0) {$A_1$};
\node[anchor=north east] at (0,6.2) {$A_2$};
\node[anchor=north east] at (0,0) {$\zero$};

\foreach \y in {3,5,7} { \foreach \x in {3,5,7} { \draw[very thick,gray] (\x,\y) \tile; } }
\foreach \y in {4,6,8} { \foreach \x in {2,4,6,8} { \draw[very thick,gray] (\x,\y) \tile; } }

\foreach \y in {1,3,5,7} { \foreach \x in {1,3,5,7} { \fill (\x,\y) circle (2pt); } }
\foreach \y in {0,2,4,6} { \foreach \x in {0,2,4,6,8} { \fill (\x,\y) circle (2pt); } }

\draw[very thick] (3,3) -- (8.2,3);
\draw[very thick] (3,3) -- (3,6.2);
\fill (3,3) circle (2pt);
\node[anchor=east] at (3,3) {$M_0$};

\end{tikzpicture}

\caption{
Black dots are the points of the lattice generated by $\V$.
The lattice fundamental regions (formed by the basis) are delimited by gray lines.
}
\label{fig:lcn-recurrent}
\end{figure}
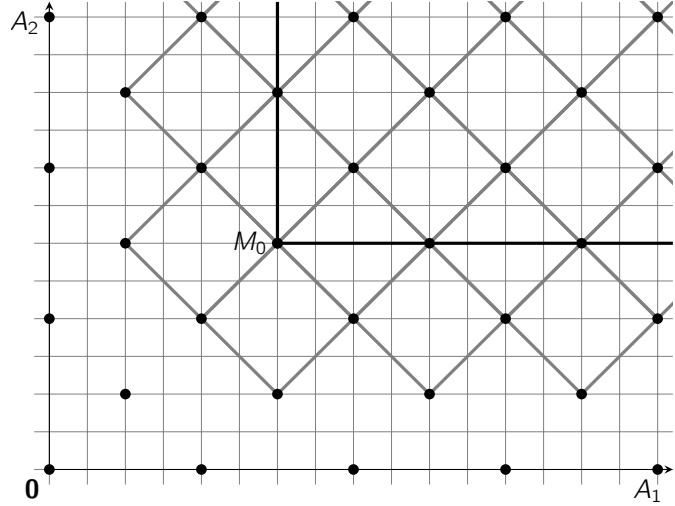

The proof of \pref{thm:lcn-recurrent} also indicates that the reachability graph above $M_0$ is
maximal: if $x+\delta \rightarrow^* x'+\delta$ when $x\succeq M_0, x'\succeq M_0, \delta\in\Zpos^d$,
then $x \rightarrow^* x'$.
This is stated by \pref{cor:saturation}.

\begin{theorem}
\label{thm:lcn-recurrent}
$(\V,\O)$ is LCN recurrent $\Longleftrightarrow
	\zero\in\mspan_{\Rspos} \V$.
\end{theorem}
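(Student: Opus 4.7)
My plan is to prove the two implications separately. The forward direction ($\Rightarrow$) is a short algebraic extraction from LCN recurrence applied to individual reactions: assume LCN recurrence with threshold $\tilde M_0$, fix a reaction $j \in [1;n]$, and pick $y \in \Zpos^d$ componentwise so large that $y \succeq \tilde M_0$, $y \succeq \O_j$, and $y + \V_j \succeq \tilde M_0$ (possible since $\tilde M_0$, $\O_j$, $\V_j$ are fixed). Then $y \rightarrow y + \V_j$, so LCN recurrence supplies $\lambda^{(j)} \in \Zpos^n$ with $\inner{\lambda^{(j)}}{\V} = -\V_j$. Setting $\mu \DEF \sum_{j=1}^n (\lambda^{(j)} + e_j)$ yields $\mu \in \Zspos^n$ (strictly positive because each $e_j$ contributes a $1$ in coordinate $j$) with $\inner{\mu}{\V} = \sum_j(-\V_j) + \sum_j \V_j = \zero$, hence $\zero \in \mspan_{\Zspos}\V \subseteq \mspan_{\Rspos}\V$.

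For the reverse direction ($\Leftarrow$), I would first combine $\zero \in \mspan_{\Rspos}\V$ with \pref{lem:zRQ} and clear denominators to obtain $\mu \in \Zspos^n$ with $\inner{\mu}{\V} = \zero$. Then fix a $\Z$-basis $b_1, \dots, b_k$ of $\mspan_{\Z}\V$: because $\mu$ is strictly positive, any $b_i = \inner{\nu^{(i)}}{\V}$ can be rewritten as $\inner{\nu^{(i)} + t\mu}{\V}$ for arbitrarily large $t$, so for every $i \in [1;k]$ and sign $s \in \{+,-\}$ one can produce $\lambda^{(i,s)} \in \Zpos^n$ with $\inner{\lambda^{(i,s)}}{\V} = s\,b_i$; fix an arbitrary ordering $\pi^{(i,s)} \in \orderings(\lambda^{(i,s)})$ of each such basis path. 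Then take $M_0 \in \Zpos^d$ componentwise large enough that from every $y \succeq M_0$ each basis path is applicable (it suffices that $M_0 + \lp(\zero \apply \pi^{(i,s)}) \succeq \O_j$ for every reaction $j$ occurring in $\pi^{(i,s)}$). Given $x, x' \succeq M_0$ with $x \rightarrow^* x'$, the displacement $x - x' \in \mspan_{\Z}\V$ decomposes as $\sum_i c_i b_i$ with $c_i \in \Z$, and the reverse path from $x'$ to $x$ is obtained by concatenating the basis paths $\pi^{(i,\mathrm{sgn}(c_i))}$, each $|c_i|$ times.

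The main obstacle will be verifying applicability of this concatenation: naively doing all $|c_i|$ copies of $\pi^{(i,s)}$ in one block could push some coordinate far below $M_0$, or even below zero, when $|c_i|$ is large. The resolution is a balanced interleaving of the basis moves so that the running lattice position stays close to the straight segment from $x'$ to $x$: since both endpoints satisfy $x, x' \succeq M_0$, that segment itself is $\succeq M_0$, and a greedy coordinate-wise ordering of the $\pm b_i$ steps keeps the running position within a fixed distance (bounded by $\max_i \|b_i\|_\infty$) of it. Absorbing this fixed distance, together with the worst-case dip of each $\pi^{(i,s)}$, into the choice of $M_0$ then keeps every intermediate state of the concrete reaction sequence above every origin. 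This is precisely the geometric picture of \pref{fig:lcn-recurrent}: for $M_0$ large, the lattice $\mspan_{\Z}\V$ is navigable reversibly inside the DRN, so any two connected points above $M_0$ are mutually reachable.
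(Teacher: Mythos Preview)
Your proof is correct and follows essentially the same route as the paper: the forward direction spells out what the paper calls ``straightforward,'' and the backward direction uses the same ingredients---\pref{lem:zRQ} to get $\mu\in\Zspos^n$ with $\inner{\mu}{\V}=\zero$, a $\Z$-basis $b_1,\dots,b_k$ of $\mspan_{\Z}\V$, realisation of each $\pm b_i$ in $\mspan_{\Zpos}\V$ via $\mu$, and then a choice of $M_0$ large enough to navigate the lattice. Your ``balanced interleaving near the segment $[x',x]$'' is exactly the paper's ``connected fundamental regions intersecting $\Z_{\succeq M_0}$'' picture of \pref{fig:lcn-recurrent}; the only quibble is that the deviation bound from the segment should be something like $k\max_i\|b_i\|_\infty$ rather than $\max_i\|b_i\|_\infty$, but since any constant independent of $x,x'$ suffices this does not affect the argument.
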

\begin{proof}
$(\Rightarrow)$ straightforward.

$(\Leftarrow)$
Let us consider $\B = \{b_1, \dots, b_k\}$ the basis of the free $\Z$-module generated by $\V$.

From \pref{lem:zRQ}, $\zero\in\mspan_{\Zspos}\V$, which implies $\forall i\in[1;k], \pm b_i\in\mspan_{\Zpos} \V$.
Hence, $\forall i\in [1;k], \forall s\in \{+,-\}, \exists \lambda^{i,s}\in\Zpos^n$ such that
$\inner{\lambda^{i,s}}{\V} = b^{i,s} \DEF s b_i$.
Let us pick an arbitrary ordering $\pi^{i,s}\in\orderings(\lambda^{i,s})$.

Let us define $M_0\in\Zpos^d$ such that
$\forall \Pi: [1:2k] \mapsto (i,s)$ with
	$i\in[1;k], s\in\{+,-\}$,
	and $\forall l,l'\in[1;2k], \Pi(l) = \Pi(l') \Rightarrow l=l'$,
then
$\forall l\in[1;2k], \forall j\in[1;n],
M_0 +\lp( (\sum_{m=1}^{l-1} b^{\Pi(m-1)}) \apply \pi^{\Pi(m)}) \succeq \O_j$.

From $M_0$ construction, the set of lattice fundamental regions formed by $b1,\dots,b_k$ intersecting $\Z_{\geq M_0}$ is
connected and fits in $\Zpos$.
Moreover, each edge of those fundamental regions can be translated to a sequence of drift vectors $v\in\V$ in
$\Zpos$.
Therefore, $\forall x,x'\succeq M_0$ $x \rightarrow' x' \Rightarrow x' \rightarrow x$.
\end{proof}

\begin{corollary}[Reachability Graph Saturation]
\label{cor:saturation}
If $\zero\in\mspan_{\Zspos} \V$ then there exists $M_0 \in\Zpos^d$ such that the reachability graph
on the set $M_0 + \Zpos^d$ becomes constant in the sense that:
if $x\rightarrow^* x'$, and $x-\delta,x'-\delta\succeq M_0$ for some $\delta\in\Zpos^d$, then
$x-\delta\rightarrow^* x'-\delta$.
\end{corollary}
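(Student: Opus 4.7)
The plan is to reuse the $M_0$ produced in the proof of \pref{thm:lcn-recurrent}, together with a slightly stronger statement that is actually established there alongside LCN recurrence, namely:
\begin{quote}
$(\star)$ for any $y,y' \succeq M_0$ with $y' - y \in \mspan_{\Z} \V$, one has $y \rightarrow^* y'$.
\end{quote}
This is essentially what the geometric argument sketched in \pref{fig:lcn-recurrent} establishes: the fundamental regions of the lattice generated by the basis $\B$ that intersect $\Z_{\geq M_0}$ form a connected subset of $\Zpos^d$, and each of their edges can be realized, from any starting point $\succeq M_0$, by one of the prepared sequences $\pi^{i,\pm}$ while staying in $\Zpos$; concatenating such edges along a lattice walk from $y$ to $y'$ yields the required reaction sequence.

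Granted $(\star)$, the corollary is a one-line consequence. Since $x \rightarrow^* x'$, the drift vectors used along the path sum to $x' - x$, so $x' - x \in \mspan_{\Zpos} \V \subseteq \mspan_{\Z} \V$. The translated difference $(x' - \delta) - (x - \delta) = x' - x$ therefore also lies in $\mspan_{\Z} \V$, and by hypothesis both $x - \delta$ and $x' - \delta$ are $\succeq M_0$, so applying $(\star)$ with $y = x - \delta$ and $y' = x' - \delta$ gives $x - \delta \rightarrow^* x' - \delta$.

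The main obstacle is making $(\star)$ fully rigorous, since the proof of \pref{thm:lcn-recurrent} leans on a picture. Given a displacement $w = \sum_i \mu_i b_i \in \mspan_{\Z} \V$, one must realize it as a concatenation of the elementary edge paths $\pi^{i,+}$ (used $\mu_i$ times if $\mu_i > 0$) and $\pi^{i,-}$ (used $-\mu_i$ times otherwise), ordered so that every running partial sum is $\succeq \O_j$ for every reaction $j$ in use, i.e., so that the whole path stays in $\Zpos$. The choice of $M_0$ in the proof of \pref{thm:lcn-recurrent} is tailored precisely for this: because $\zero \in \mspan_{\Zspos}\V$, whenever a ``downward'' edge threatens to cross below $M_0$ one may insert a balancing null cycle $\pi^{i,+}\pi^{i,-}$ to restore enough margin without changing the endpoint, and the bound on $\lp$ built into $M_0$ ensures that this adjustment keeps all intermediate states above the reaction origins. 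Writing down a definite ordering procedure and checking that the required number of insertions remains finite is the only real work; everything else is bookkeeping that follows the same pattern used for $(\star)$.
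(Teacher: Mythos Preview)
Your reduction of the corollary to the statement $(\star)$ and the one-line derivation from it is exactly what the paper does: the corollary carries no separate proof and is presented as an immediate by-product of the argument for \pref{thm:lcn-recurrent}, which (via the lattice/fundamental-region picture of \pref{fig:lcn-recurrent}) is meant to yield precisely $(\star)$. So for the corollary itself your approach coincides with the paper's.

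Your third paragraph goes beyond the paper by trying to make $(\star)$ rigorous, which the paper does not attempt. One caution: the specific device you propose, inserting a null cycle $\pi^{i,+}\pi^{i,-}$ to ``restore margin'', does not help as written, since such a cycle returns to the same lattice point and therefore gives no additional room for the subsequent downward edge. What the picture is really encoding is a connectedness statement: the lattice points of the relevant coset lying in $M_0+\Zpos^d$ form a graph connected by $\pm b_i$-edges, and each such edge is realizable by the prepared path $\pi^{i,\pm}$ from any vertex $\succeq M_0$ by the choice of $M_0$. Establishing that connectedness (finding an ordering of the $\pm b_i$ steps that keeps every intermediate lattice vertex $\succeq M_0$) is where the actual work lies; null-cycle insertions do not shortcut it. This is a gap in the paper's exposition of \pref{thm:lcn-recurrent} rather than in your treatment of the corollary.
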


\begin{example}
From previous section, we know that example (b) of \pref{fig:examples} is irreducible hence
recurrent.
In addition, one can verify that example (a) is LCN recurrent.
\end{example}

\subsection{Full Recurrence}

Assuming DRN $(\V,\O)$ is LCN recurrent, if $\exists x^*\in\Zspos$ such that
$\zero \rightarrow^* x^* \rightarrow^* \zero$, then $(\V,\O)$ is recurrent
(\pref{lem:self-recurrent}).
Indeed, using \pref{lem:scale}, $\exists \alpha\in\Zspos$ such that $\alpha x^* \succeq M_0$.
Then, for any pair of points $x,x'\in\Zpos$, if $x\rightarrow^* x'$, then, by
\pref{lem:expansion}, $x+\alpha x^* \rightarrow^* x'+\alpha x^*$.
Because the DRN is LCN recurrent, $x'+\alpha x^* \rightarrow^* x+\alpha x^*$.
Hence, $x' \rightarrow^* x$.
We remark however that, to our knowledge,  there is no efficient general method to verify if
$\zero \rightarrow^* x^* \rightarrow^* \zero$.

\begin{lemma}
\label{lem:self-recurrent}
If DRN $(\V,\O)$ is LCN recurrent and $\exists x^*\in\Zpos^d$ such that $\zero\rightarrow^* x^*$ and
$x^*\rightarrow^* \zero$, then $(\V,\O)$ is recurrent.
\end{lemma}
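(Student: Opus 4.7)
The plan is to use the point $x^*$ as a kind of ``reservoir'' of species that, when scaled up, pushes any pair of states comparable under $\rightarrow^*$ into the LCN region where reversibility is already known. Concretely, given arbitrary $x,x'\in\Zpos^d$ with $x\rightarrow^* x'$, I want to factor the desired reverse path $x'\rightarrow^* x$ through a large shifted copy $x+\alpha x^*$, $x'+\alpha x^*$ living above the LCN threshold $M_0$ provided by LCN recurrence.

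First I would fix the shift. Applying \pref{lem:scale} to $\zero\rightarrow^* x^*$ gives $\zero\rightarrow^* \alpha x^*$ for every $\alpha\in\Zspos$. To obtain the symmetric statement $\alpha x^*\rightarrow^* \zero$, I would apply \pref{lem:inverse} to $x^*\rightarrow^*\zero$, getting $\zero\rightarrow^*_{(\V,\O)^{-1}} x^*$, then scale in the inverse DRN via \pref{lem:scale} to obtain $\zero\rightarrow^*_{(\V,\O)^{-1}}\alpha x^*$, and translate back with \pref{lem:inverse} to $\alpha x^*\rightarrow^*_{(\V,\O)}\zero$. Choosing $\alpha$ large enough that $\alpha x^*\succeq M_0$ ensures both $x+\alpha x^*\succeq M_0$ and $x'+\alpha x^*\succeq M_0$ automatically.

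Next I would chain four reachabilities. From $\zero\rightarrow^*\alpha x^*$ together with \pref{lem:expansion} applied with shift $x'$, I get $x'\rightarrow^* x'+\alpha x^*$. From the assumed $x\rightarrow^* x'$ and \pref{lem:expansion} with shift $\alpha x^*$, I get $x+\alpha x^*\rightarrow^* x'+\alpha x^*$; LCN recurrence applied to this pair of points, both $\succeq M_0$, reverses it to $x'+\alpha x^*\rightarrow^* x+\alpha x^*$. Finally $\alpha x^*\rightarrow^*\zero$ combined with \pref{lem:expansion} with shift $x$ yields $x+\alpha x^*\rightarrow^* x$. Concatenating these three steps by transitivity of $\rightarrow^*$ gives the desired $x'\rightarrow^* x' + \alpha x^* \rightarrow^* x + \alpha x^* \rightarrow^* x$.

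I do not expect any real obstacle here: the hypothesis $\zero\rightarrow^* x^*\rightarrow^*\zero$ is exactly what is needed to pre- and post-pend a ``loop through $\alpha x^*$'' around any path, and the LCN region is large enough to host the reverse of the shifted path. The only mild care is to verify that $\alpha x^*$ can be chosen simultaneously large enough for LCN recurrence, which follows from $x^*\in\Zspos^d$ (so every component of $\alpha x^*$ grows unboundedly with $\alpha$) and the fact that $M_0$ is a fixed vector.
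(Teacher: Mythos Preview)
Your proposal is correct and follows exactly the paper's route: pick $\alpha$ with $\alpha x^*\succeq M_0$ and factor $x'\rightarrow^* x'+\alpha x^*\rightarrow^* x+\alpha x^*\rightarrow^* x$ using \pref{lem:expansion} and LCN recurrence. Your detour through the inverse DRN to justify $\alpha x^*\rightarrow^*\zero$ is valid but unnecessary---iterating \pref{lem:expansion} on $x^*\rightarrow^*\zero$ with shifts $(k-1)x^*$ gives $\alpha x^*\rightarrow^*(\alpha-1)x^*\rightarrow^*\cdots\rightarrow^*\zero$ directly---and note that your final remark correctly identifies that one really needs $x^*\in\Zspos^d$ (as in the discussion preceding the lemma) rather than $\Zpos^d$ for the choice of $\alpha$ to exist.
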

\begin{proof}
Let us define $\alpha\in\Zspos$ such that $\alpha x^* \succeq M_0$.
We have the following implication:
$\forall x,x'\in\Zpos^d, x \rightarrow^* x' \Longrightarrow
x' \rightarrow^* x'+\alpha x^* \rightarrow^* x+\alpha x^* \rightarrow^* x\enspace.$
\end{proof}

In the general case, and independently of LCN recurrence, we notice that recurrence is equivalent
to the reachability of the origin of each reaction from the point that is its origin plus drift
vector (\pref{lem:recurrent}).
Again, there is currently no efficient general method to verify these reachability properties.

\begin{lemma}
\label{lem:recurrent}
DRN $(\V,\O)$ is recurrent if and only if $\forall j\in[1;n], \O_j+\V_j \rightarrow^* \O_j$.
\end{lemma}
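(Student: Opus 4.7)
The plan is to prove the two directions separately, with the reverse direction being the substantive one and the forward direction being essentially a tautology. The key technical device for the reverse direction will be Lemma \ref{lem:expansion}, which lets me promote single-reaction reversibility at the origin to reversibility at an arbitrary state.

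For the forward direction, assume $(\V,\O)$ is recurrent. For each $j\in[1;n]$, note that $\O_j \rightarrow \O_j + \V_j$ holds directly by Definition \ref{def:DRN-Sem}: the guard $\O_j \succeq \O_j$ is trivial, and $\O_j + \V_j \in \Zpos^d$ by the standing condition in Definition \ref{def:DRN}. The recurrence hypothesis then immediately yields $\O_j + \V_j \rightarrow^* \O_j$.

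For the reverse direction, assume $\O_j + \V_j \rightarrow^* \O_j$ holds for every $j\in[1;n]$, and fix $x, x' \in \Zpos^d$ with $x \rightarrow^* x'$. Decompose the reachability witness into a one-step chain $x = x_0 \rightarrow x_1 \rightarrow \dots \rightarrow x_\ell = x'$ where the $i$-th step fires some reaction $j_i$, so $x_i \succeq \O_{j_i}$ and $x_{i+1} = x_i + \V_{j_i}$. The plan is to reverse each individual step and concatenate. Setting $\delta_i \DEF x_i - \O_{j_i} \in \Zpos^d$ (well-defined because $x_i \succeq \O_{j_i}$), the hypothesis $\O_{j_i} + \V_{j_i} \rightarrow^* \O_{j_i}$ combined with Lemma \ref{lem:expansion} applied to the shift $\delta_i$ gives $(\O_{j_i} + \V_{j_i}) + \delta_i \rightarrow^* \O_{j_i} + \delta_i$, i.e., $x_{i+1} \rightarrow^* x_i$. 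Chaining these reversed steps in reverse order yields $x' = x_\ell \rightarrow^* x_{\ell-1} \rightarrow^* \dots \rightarrow^* x_0 = x$.

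The main subtlety — and the only step where care is needed — is the invocation of Lemma \ref{lem:expansion}: one must check that the shift $\delta_i$ used to lift the reachability from $\O_{j_i} + \V_{j_i}$ to $\O_{j_i}$ is nonnegative, which is exactly guaranteed by the firability condition $x_i \succeq \O_{j_i}$ for the $i$-th reaction. There is no genuine obstacle in this argument; the lemma is essentially a compact restatement of the observation that recurrence is a local, per-reaction condition once Lemma \ref{lem:expansion} is available to transport reversibility across the state space.
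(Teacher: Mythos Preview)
Your proof is correct and follows essentially the same approach as the paper's: the paper's one-line argument for the reverse direction, $\forall x\in\Zpos^d, \forall j\in[1;n]: x\succeq \O_j, x\rightarrow x+\V_j\rightarrow^* x$, is exactly your per-step reversal via the shift $\delta_i = x_i - \O_{j_i}$ and \pref{lem:expansion}, just left implicit. Your write-up simply unpacks the induction along the chain that the paper takes for granted.
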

\begin{proof}
($\Rightarrow$) straightforward.
($\Leftarrow$)
$\forall x\in\Zpos^d, \forall j\in[1;n]: x\succeq \O_j, x\rightarrow x+\V_j\rightarrow^* x$
\end{proof}

The above lemma allows to conclude that any \emph{weakly reversible reaction network} is recurrent
(\pref{lem:wrrn-recurrence}).
A reaction network is weakly reversible if each reaction is part of a cycle of reactions
\cite{JSS-JMC12}; for instance $X\rightarrow Y; Y\rightarrow Z; Z\rightarrow X$ is a weakly
reversible reaction network.

\begin{lemma}
\label{lem:wrrn-recurrence}
Any weakly reversible reaction network is recurrent.
\end{lemma}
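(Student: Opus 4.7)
The plan is to invoke \pref{lem:recurrent}, which reduces full recurrence to showing that, for every reaction $j\in[1;n]$, the point $\O_j+\V_j$ can reach $\O_j$. So the whole argument comes down to exhibiting, for each reaction, a concrete path of reactions implementing this round-trip, and then verifying that this path is admissible under \pref{def:DRN-Sem} (i.e., each reaction in turn is fired from a state that dominates its origin).

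Fix a reaction $j$, whose source complex is $C^- := \O_j$ and product complex is $C^+ := \O_j+\V_j$. By the definition of weak reversibility, reaction $j$ belongs to a cycle in the directed graph of complexes. Removing reaction $j$ from that cycle leaves an oriented path of reactions $j_1,j_2,\dots,j_m$ whose source complexes and product complexes satisfy $\O_{j_1}=C^+$, $\O_{j_s}+\V_{j_s}=\O_{j_{s+1}}$ for $s\in[1;m-1]$, and $\O_{j_m}+\V_{j_m}=C^-$. The sequence therefore performs a guided walk through a chain of complexes whose endpoints are exactly $C^+$ and $C^-$.

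Next I would show by induction on $s\in[0;m]$ that starting from $C^+$ the prefix $j_1,\dots,j_s$ is applicable and lands at complex $\O_{j_{s+1}}$ (with the convention $\O_{j_{m+1}}:=C^-$). The base case $s=0$ is vacuous since the current state $C^+=\O_{j_1}$ satisfies $C^+\succeq\O_{j_1}$. For the inductive step, the current state before firing $j_{s+1}$ is by induction the complex $\O_{j_{s+1}}$ itself, which trivially dominates $\O_{j_{s+1}}$, so \pref{def:DRN-Sem} allows the transition, after which the new state is $\O_{j_{s+1}}+\V_{j_{s+1}}=\O_{j_{s+2}}$. At $s=m$ we conclude $C^+\rightarrow^* C^-$, i.e., $\O_j+\V_j\rightarrow^*\O_j$.

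Since $j$ was arbitrary, the hypothesis of \pref{lem:recurrent} is satisfied and $(\V,\O)$ is recurrent. I do not expect any serious obstacle: weak reversibility furnishes the cycle for free, and the admissibility check is immediate because at every step the current state coincides exactly with the source complex of the next reaction, so no auxiliary population is needed and no species can go negative. The only care required is in matching the graph-theoretic notion of weak reversibility with the DRN bookkeeping of origins and drifts, which is what the first two paragraphs make explicit.
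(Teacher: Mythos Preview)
Your proposal is correct and follows essentially the same approach as the paper: both invoke \pref{lem:recurrent} and then, for each reaction $j$, use the cycle furnished by weak reversibility to walk from $\O_j+\V_j$ back to $\O_j$, noting that at every step the current state coincides with the source complex of the next reaction so admissibility is automatic. Your write-up is simply more explicit about the induction on the prefix of the path, whereas the paper compresses this into a one-line characterisation of weak reversibility in terms of origins and drifts.
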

\begin{proof}
A DRN models a weakly reversible reaction network if and only if
$\forall j\in[1;n], \exists m\in[1;n]$ and $\pi:[1;m]\mapsto [1;n]$
such that
$\forall k\in[1;m], \O_k = \O_j + \V_j + \sum_{l=1}^{k-1} \V_l$
and
$\O_j = \O_j + \V_j + \sum_{l=1}^{k} \V_l$.
Therefore, $\forall j\in[1;n], \O_j+\V_j \rightarrow^* \O_j$.
\end{proof}

\begin{example}
The sufficient condition for recurrence depicted in \pref{lem:self-recurrent} is verified by 
example (a) of \pref{fig:examples}.
Indeed, $\zero \rightarrow^* (6,6) \rightarrow^* \zero$
(applying $3\V_1$ then $2\V_3$ from $\zero$ results in $(6,6)$, then applying $6\V_2$ results in
$\zero$).
Hence, example (a) is recurrent (but not irreducible), whereas example (b) is irreducible (and recurrent).
\end{example}

\section{Biological Examples}
\label{sec:examples}

This section applies the results of this paper to show that a model of Circadian clock is LCN
irreducible, and a generic model of phosphorylation chain is LCN recurrent.

\subsection{Circadian clock}

We study here a model of PER and TIM circadian oscillations from \cite{Leloup99}, extracted from
the BioModels database \citep{BioModels}.
This model involves 10 species and 26 reactions (including 6 reversible).
The list of reactions is given in \pref{fig:circ}

\begin{figure}
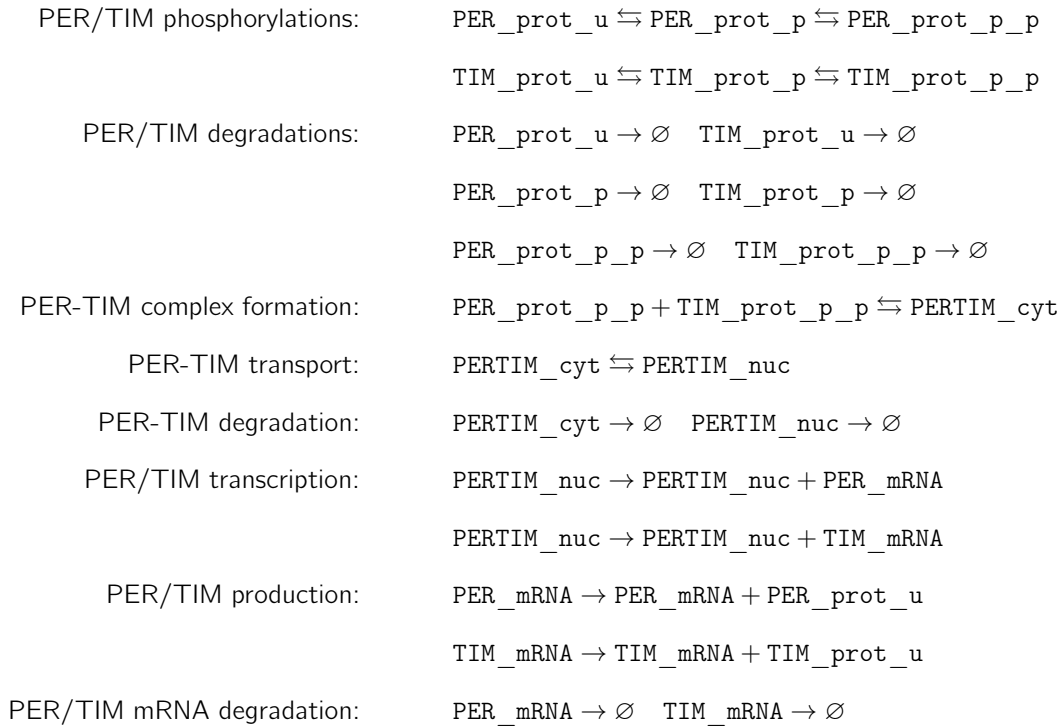

\begin{align*}
\text{PER/TIM phosphorylations:} & &
 & \mathtt{PER\_prot\_u} \leftrightarrows \mathtt{PER\_prot\_p} \leftrightarrows \mathtt{PER\_prot\_p\_p} \\
& & & \mathtt{TIM\_prot\_u} \leftrightarrows \mathtt{TIM\_prot\_p} \leftrightarrows \mathtt{TIM\_prot\_p\_p} \\
\text{PER/TIM degradations:} 
&&& \mathtt{PER\_prot\_u}\rightarrow\varnothing \quad \mathtt{TIM\_prot\_u} \rightarrow \varnothing \\
&&& \mathtt{PER\_prot\_p}\rightarrow\varnothing \quad \mathtt{TIM\_prot\_p} \rightarrow \varnothing \\
&&& \mathtt{PER\_prot\_p\_p}\rightarrow\varnothing \quad \mathtt{TIM\_prot\_p\_p} \rightarrow \varnothing \\
\text{PER-TIM complex formation:}
&&& \mathtt{PER\_prot\_p\_p} + \mathtt{TIM\_prot\_p\_p} \leftrightarrows \mathtt{PERTIM\_cyt}\\
\text{PER-TIM transport:}
&&& \mathtt{PERTIM\_cyt} \leftrightarrows \mathtt{PERTIM\_nuc} \\
\text{PER-TIM degradation:}
&&& \mathtt{PERTIM\_cyt} \rightarrow \varnothing \quad  \mathtt{PERTIM\_nuc}\rightarrow\varnothing \\
\text{PER/TIM transcription:}
&&& \mathtt{PERTIM\_nuc} \rightarrow \mathtt{PERTIM\_nuc} + \mathtt{PER\_mRNA}\\
&&& \mathtt{PERTIM\_nuc} \rightarrow \mathtt{PERTIM\_nuc} + \mathtt{TIM\_mRNA}\\
\text{PER/TIM production:}
&&& \mathtt{PER\_mRNA} \rightarrow \mathtt{PER\_mRNA} + \mathtt{PER\_prot\_u}\\
&&& \mathtt{TIM\_mRNA} \rightarrow \mathtt{TIM\_mRNA} + \mathtt{TIM\_prot\_u}\\
\text{PER/TIM mRNA degradation:}
&&& \mathtt{PER\_mRNA} \rightarrow \varnothing \quad  \mathtt{TIM\_mRNA}\rightarrow\varnothing
\end{align*}
\caption{Reaction network of the PER/TIM circadian oscillations \citep{Leloup99}}
\label{fig:circ}
\end{figure}

One can check that the necessary and sufficient conditions for LCN irreducibility of
\pref{thm:lcn-irreducibility} are verified by this DRN.
Hence, there exists a threshold on the population of species such that there exists a succession of
reactions connecting any pair of states above this threshold.

Because no reaction has an origin being $\zero$, the DRN is not self-starting, hence not fully
irreducible; and because of the presence of degradation reaction, the DRN is not fully
recurrent (for instance, $\zero$ is reachable from the state where all species are 0 except
$\mathtt{PER\_mRNA}$ being 1, but the converse is false).

\subsection{Phosphorylation chains}

We consider a generic model of chains of phosphorylation, where an enzyme $E$ can progressively
phosphorylate a protein up to a certain level $k$.
In concurrence, a kinase $F$ can progressively de-phosphorylate this protein \citep{ALS-MB07}.
\begin{align*}
S_0 + E \leftrightarrows S_0E \rightarrow S_1 + E \leftrightarrows S_1E \rightarrow S_2 + E
\leftrightarrows
\cdots
\rightarrow S_k + E
\\
S_0 + F \leftarrow S_1F \leftrightarrows S_1 + F \leftarrow S_2F \leftrightarrows S_2 + F 
\leftarrow
\cdots
\leftrightarrows S_k + F
\end{align*}

Because of mass conservation properties (notably $\sum_{m=0}^k S_m$ being constant), such a DRN is not
irreducible -- in particular, $\mspan_{\Rspos} \V \neq \R^d$.

Assuming LCN, one can notice that the irreversible reactions such as $S_mE \rightarrow S_{m+1}+E$ can be
undone using the chain of reaction $S_{m+1} + F \rightarrow S_{m+1} F \rightarrow S_m+F$ followed by
$S_m+E\rightarrow S_mE$.
The undo of $S_m + F \leftarrow S_mF$ irreversible reactions is achieved similarly.
This shows that the DRN is LCN recurrent as $\zero\in\mspan_{\Rspos} \V$.
In addition, we remark that it is actually sufficient that all the species are present with at least
one copy in order to undo any irreversible reaction of this network (i.e., $M_0$ can be the vector
having all its components being $1$).

Removing the LCN hypothesis, and in particular considering that $F$ is absent ($0$ copy), it
becomes impossible to revert the reaction $S_0E \rightarrow S_1+E$.
Hence, the DRN is not fully recurrent.

\bigskip

LCN irreducibility depends both on stoichiometry properties (as highlighted by the two 
examples in \pref{fig:examples}) and on the dimension of the lattice generated by $\V$: if the free $\Z$-module
generated by $\V$ has a lower dimension than $\V$, the DRN is not LCN irreducible.
This typically occurs in the presence of mass conservation properties, as highlighted by the
example on phosphorylation chains.

In addition, as stated in \pref{lem:wrrn-recurrence}, we recall that any weakly reversible reaction networks is
recurrent, as the necessarily verify $\zero\in\mspan_{\Rspos}\V$.

\section{Discussion}
\label{sec:discussion}

\paragraph{Relationships between DRNs and stochastic models dynamics}
Markov chains are a widely used modelling framework for analysing dynamics of biochemical reaction networks.
Typically, the discrete states of such Markov chains represent the population of each biochemical species,
and the transitions follow the drift vectors of reactions, when applicable (population of species
greater than the reaction origin).
Then, Markov chains associate either probabilities (DTMCs) or continuous rates (CTMCs) to transitions
following biochemical laws, for instance.

In that sense, a DRN can be considered as the underlying discrete dynamics of \emph{any} Markov
chain modelling the same set of reactions
\citep{FS08}.
If we assume that the probabilities or rates associated to reactions are never null, we obtain the
following correspondence between DRNs and Markov chains dynamical properties:
\begin{itemize}
\item DRN is irreducible if and only if the associated Markov chain is irreducible.
\item DRN is recurrent if and only if all states in the associated Markov chain are recurrent.
\end{itemize}
In the case where probability or rates may become null, DRN irreducibility (resp. recurrence) is still a
necessary condition for Markov chain irreducibility (resp. recurrence).

We note that a DRN which is not recurrent implies that there exists some irreversible steps.
Such a reversible property allows, for instance, an efficient characterization of the stationary
distribution in Markov chains \citep{ACK-BMB10}.

\paragraph{Relationships between DRNs and continuous models dynamics}
Continuous models of reaction networks, such as ODE equations, typically evolve in the continuous
space of concentrations of species and assume that species are present in large copy numbers.
In that way, we may want to relate dynamical properties of such continuous models of reaction
networks to LCN properties of DRNs.

In particular, one can remark that if a DRN is not LCN recurrent, i.e. $\zero\notin\mspan_{\Rspos}
\V$,
there exists a hyperplane in $\R^d$ such that all reaction vectors point in the same side of this
hyperplane, and at least one reaction vector points strictly inside this half-space.
This implies that \emph{no oscillation is possible} in the continuous dynamics: a
non-zero drift is always pushing the system in a constant direction.

\paragraph{Future work}
One possible future direction following the presented results is the derivation of necessary or sufficient
conditions for a discrete definition of \emph{persistence} in continuous models \citep{CNP-JAM12}.
Persistence is the capability for a system to recover a strictly positive population for all species
whenever one the species approaches zero.

One suggested discrete version of this dynamical property is given in \pref{def:persistence}.
We remark that recurrence is a particular case of persistence (\pref{rem:persistence}).

\begin{definition}[Persistence]
\label{def:persistence}
DRN $(\V,\O)$ is \emph{persistent} if and only if
$\forall x\in\Zspos^d, \forall x'\in\Zpos^d$ s.t. $\exists k\in[1;d]. x'_k = 0$,
$x \rightarrow^* x' \Longrightarrow \exists x''\in \Zspos^d. x' \rightarrow x''$
\enspace.
\end{definition}

\begin{remark}
\label{rem:persistence}
Recurrence $\Longrightarrow$ Persistence.
\end{remark}

More generally, the study of Discrete Reaction Networks allows to efficiently prove the absence of certain dynamical properties
in a wide-range of concrete models as they are independent of kinetic parameters.

\section*{Acknowledgements}
The work of GC was supported by NIH grant R01GM086881.

\bibliographystyle{apalike}
\bibliography{biblio}

\end{document}